\def\dj{d\kern-.30em\raise1.25ex\vbox{\hrule width .3em height .03em}}
\def\Dj{D\kern-.70em\raise0.75ex\vbox{\hrule width .3em height .03em}
\kern.03em}
\renewcommand{\subsection}{\@startsection{subsection}{2}{\z@}%
{\baselineskip}{0.5\baselineskip}{\bfseries}}
\def\l@section{\def\@tocpagenum##1{\hss{\bfseries ##1}}
\@tocline{1}{8pt}{0pc}{}{\bfseries}}
\def\l@subsection{\def\@tocpagenum##1{}
\@tocline{2}{2pt}{2pc}{2pc}{}}
\def\1{\varnothing}
\newtheorem{theorem}{Theorem}[section]
\newtheorem{lemma}{Lemma}[section]
\newtheorem{corollary}{Corollary}[section]
\newtheorem{prop}{Proposition}[section]
\theoremstyle{definition}
\newtheorem{definition}{Definition}[section]
\newtheorem{remark}[definition]{Remark}
\numberwithin{equation}{section}
\begin{document}

\renewcommand{\thepage}{\ifnum\value{page}=1 \else\arabic{page}\fi}

\title{Coherent States for the Manin Plane
\\	
via Toeplitz Quantization}

\author{Micho \Dj ur\dj evich}

\address{Universidad Nacional Aut\'onoma de M\'exico, 
Instituto de Matem\'aticas, Area de la Investigacion Cient\'{\i}fica, 
Circuito Exterior, Ciudad Universitaria, CP 04510,
Mexico City, Mexico}
\email{micho@matem.unam.mx}

\author{Stephen Bruce Sontz}

\address{Centro de Investigaci\'on en Matem\'aticas, A.C., 
(CIMAT), Jalisco S/N, Mineral de Valenciana, CP 36023, 
Guanajuato, Mexico}
\email{sontz@cimat.mx}

\begin{abstract}
	In the theory of Toeplitz quantization 
	of algebras, as developed by the second 
	author, coherent states are defined as
	eigenvectors of a 
	Toeplitz annihilation operator. 
	These coherent states are studied 
	in the case when the algebra is the 
	generically 
	non-commutative Manin plane. 
	In usual quantization schemes
	one starts with a classical phase space, 
	then quantizes it in order to produce  
	annihilation operators and then their eigenvectors 
	and eigenvalues. 
	But we do this in the opposite order, namely  
	the set of the 
	eigenvalues of the previously defined 
	annihilation operator 
	is identified as a 
	generalization of 
	a classical mechanical phase space.
    We introduce the resolution of the identity, 
	upper and lower symbols as well as 
	a coherent state quantization, which in turn 
	quantizes the Toeplitz 
	quantization. 
	We thereby have  
	a curious composition of
	quantization schemes. 
	We proceed by identifying a generalized
	Segal-Bargmann space $ \mathcal{SB} $
	of square-integrable, anti-holomorphic 
	functions as the image of a coherent state 
	transform. 
	Then $ \mathcal{SB} $ has a reproducing kernel function
	which allows us to define a secondary Toeplitz
	quantization, whose symbols are functions. 
    Finally, this is compared with
    the coherent states of the Toeplitz 
    quantization of a closely related 
    non-commutative space known as the 
    paragrassmann algebra.
\end{abstract}

\maketitle

\noindent 
{\bf Keywords:} 
Coherent states, 
Toeplitz operators,
annihilation operators, 
coherent state quantization

\noindent 
{\bf MSC2010 codes} Primary: 81R30, 47B35; 
Secondary: 81R60,  47B32

\tableofcontents

\section{Introduction}

In \cite{sbs2} the second author has introduced a
general formalism for defining Toeplitz operators
whose symbols come from a not necessarily 
commutative algebra. 
One motivating example of this was presented 
in \cite{sbs1}. 
In the case when the algebra has a 
{\em $ * $-operation} 
(also known as a {\em conjugation})
we can often identify in a natural way a 
sub-algebra, which is {\em not} closed under 
the $ * $-operation, of 
{\em holomorphic} elements. 
Then its conjugate sub-algebra consists of the  
{\em anti-holomorphic} elements. 
The convention used here is that the common,
invariant domain of the 
Toeplitz operators is the 
holomorphic sub-algebra, which is also a 
pre-Hilbert space. 
The Toeplitz operators whose symbols are 
holomorphic elements then play the role 
of {\em creation operators} while the
Toeplitz operators whose symbols are 
anti-holomorphic elements play the role 
of {\em annihilation operators}. 

In some examples there is a natural grading 
of the elements in the sub-algebra
in which case the annihilation operators that
lower degree by $ 1 $ are analogous to the 
annihilation operator $ \partial / \partial z $ 
acting on holomorphic functions of the 
Segal-Bargmann space. 
(For details see \cite{barg}.) 
Such degree $ -1 $ annihilation operators can 
be used to define coherent states as their 
eigenvectors whose corresponding eigenvalues
are the points in a classical phase space. 
Turning this understanding around, we may use
an appropriate Toeplitz annihilation operator 
in order to define coherent states in the 
Toeplitz setting as its eigenvectors whose 
eigenvalues then define the classical phase space. 
We do exactly that in this paper 
in the setting given by the example in \cite{sbs1} 
of the non-commutative (or, as it is sometimes 
called, quantum) plane of Manin.
However, we wish to emphasize our viewpoint that 
the Manin plane is not a quantum object in the sense 
of quantum theory, since among other things 
Planck's constant $ \hbar > 0 $
does not enter into its structure. 
On the other hand, the Weyl-Heisenberg (unital)
algebra generated by elements $ P $ and $ Q $ and 
satisfying the commutation relation
$ i [P, Q ] = \hbar \, 1 $ does come from quantum 
theory, explicitly involves Planck's constant 
$ \hbar $ and 
has a natural action of 
the Lie group 
$ SL (2, \mathbb{C}) \rtimes \mathbb{C}^{2}$ (the semi-direct product)
that also acts naturally on the (complex) 
plane $ \mathbb{C}^{2}  $. 
So, the  non-commutative 
Weyl-Heisenberg algebra deserves to be
called the {\em quantum plane}. 
This point of view is also expressed in 
\cite{weaver}. 
Nonetheless, the Manin plane is an interesting 
object studied in 
non-commutative geometry. 

For background material on coherent states 
two excellent references 
with many examples are \cite{ali} and  
\cite{gazeau}. 
The recent book \cite{antoine} has 14 review  
papers on a variety of topics in the field. 
We use standard notation. 
For example, 
$ \mathbb{N} $,  $ \mathbb{Z} $, $ \mathbb{R} $, 
$ \mathbb{C} $ 
denote the sets of the non-negative 
integers,  all the  integers, the real numbers 
and  the complex numbers, respectively.
The complex conjugate of $ \lambda \in \mathbb{C} $
is denoted by $ \lambda^{*} $.

\section{The Setting}
\label{setting-section}

We consider the {\em Manin plane} 
$ \mathbb{C} Q_{q} (\theta, \overline{\theta})$
which is the unital, complex algebra generated by 
two conjugate elements $ \theta $ and $ \overline{\theta} $
subject only to the commutation relation 
$\theta \overline{\theta} = 
q \overline{\theta}      \theta $
for some non-zero $ q \in \mathbb{C} $. 
This algebra is  non-commutative 
except when $ q = 1 $. 
The elements 
$ \theta^{i} \overline{\theta}\,\!^{j} $
with $ i,j \in \mathbb{N} $ form a basis 
of 
$ \mathbb{C} Q_{q} (\theta, \overline{\theta}) $. 
Toeplitz operators with symbols in 
$ \mathbb{C} Q_{q} (\theta, \overline{\theta})$
have been defined and studied in \cite{sbs1}. 
In particular, for any element 
$g \in \mathbb{C} Q_{q} (\theta, \overline{\theta})$
there is a linear operator, called the 
{\em Toeplitz operator with symbol $ g $}, denoted by 
$ T_{g} : \mathcal{P} \to \mathcal{P} $, where
$ \mathcal{P} $ is the 
holomorphic subalgebra generated by 
the holomorphic variable $ \theta $. 
Of course, $ \mathcal{P} = \mathbb{C} [\theta]$ is 
the polynomial algebra generated by $ \theta $. 
So $ \mathcal{P} $ has a grading given by the degree 
of a homogeneous polynomial and 
has a basis $ \{ \theta^{n} \,|\, n \ge 0\} $. 
We now sketch how this {\em Toeplitz quantization} 
is realized. 
See \cite{sbs1} for more details.
 
There is a sesqui-linear form determined on 
$ \mathbb{C} Q_{q} (\theta, \overline{\theta})$ by 
\begin{equation}
\label{define-sesqui-form}
\langle 
\theta^{i} \overline{\theta}\,\!^{j}, 
\theta^{k} \overline{\theta}\,\!^{l}
 \rangle := w_{i+l}\, \delta_{i-j,k-l}
 \quad \mathrm{for~} i,j,k,l \in \mathbb{N}, 
 \end{equation}
where the {\em weights} $ w_{n} $ satisfy 
$ w_{n} > 0 $ for every integer $ n \ge 0 $. 
Also, $ \delta_{n,m} $ denotes the Kronecker delta
for integers $ n,m \in \mathbb{Z} $. 
Our convention is that a sesqui-linear form
is linear is its second entry while being anti-linear 
in its first entry. 
The same convention holds for the inner product
in a Hilbert space. 

When restricted to $ \mathcal{P} $ this 
sesqui-linear form \eqref{define-sesqui-form}
satisfies  
$$
\langle \theta^{i} , \theta^{k} \rangle = 
w_{i} \, \delta_{i,k},
$$
which clearly gives a positive definite inner product. 
So, 
$$ 
\mathcal{B}:= 
\Big\{ \phi_{j} \equiv \phi_{j} (\theta) := 
\frac{1}{w_{j}^{1/2}} \theta^{j} ~\Big|~ j \ge 0  \Big\} 
$$ 
is an orthonormal Hamel basis for the 
(incomplete) pre-Hilbert space 
$ \mathcal{P} $. 
We let $ \mathcal{H} $ denote the Hilbert space 
completion of $ \mathcal{P} $. 
The inner product on $ \mathcal{H} $, being the 
extension of that on $ \mathcal{P} $, is also 
denoted as $ \langle \cdot , \cdot \rangle $. 
We note that $ \mathcal{B} $ is an orthonormal basis
for $ \mathcal{H} $. 
We use the sesqui-linear form to define a 
linear map $ P $ on 
$ \mathbb{C} Q_{q} (\theta, \overline{\theta}) $ 
determined by 
$$
P (\theta^{i} \overline{\theta}\,\!^{j} ) :=
\sum_{k=0}^{\infty} w_{k}^{-1} \langle
\theta^{k}, \theta^{i} \overline{\theta}\,\!^{j} \rangle \, \theta^{k} 
\quad \mathrm{for~all~}i,j \in \mathbb{N}. 
$$
Due to the definition of the sesqui-linear form, 
this infinite sum has at most one 
non-zero term (when $ k = i - j \ge 0 $)
and so it makes sense. 
Since the elements 
$ \theta^{i} \overline{\theta}\,\!^{j} $
with $ i,j \in \mathbb{N} $ form a basis 
of 
$ \mathbb{C} Q_{q} (\theta, \overline{\theta}) $, 
this uniquely determines the linear map $ P $. 
In Dirac notation (which technically does not apply,
since we are not in a Hilbert space setting) we 
have that 
$ P = \sum_{k=0}^{\infty} | \phi_{k} (\theta) \rangle
\langle \phi_{k} (\theta) |  $. 
So it is not surprising that $ P = P^{2} $ (that is,
$ P $ is a projection) with range $ \mathcal{P} $, 
the algebraic span of the $ \phi_{k} (\theta)$. 

Now for any 
$g \in \mathbb{C} Q_{q} (\theta, \overline{\theta})$
we define the 
{\em Toeplitz operator with symbol $ g $}, denoted as
$ T_{g} : \mathcal{P} \to \mathcal{P} $, by
$ T_{g} (\phi) := P (g \phi) $ for all 
$ \phi \in \mathcal{P} $.  
Notice that the product $g \phi $ of the 
two elements $g, \phi \in 
\mathbb{C} Q_{q} (\theta, \overline{\theta}) $ 
is again an element in the algebra 
$\mathbb{C} Q_{q} (\theta, \overline{\theta})$. 
Then the projection $ P $ maps this product to 
an element of $ \mathcal{P} $. 
We have chosen to multiply the symbol $ g $ 
on the left in the definition of $ T_{g} $. 
A similar, but not identical,  theory 
entails if we multiply 
on the right, which was what was done in \cite{sbs1}. 
We will explain later on why we 
preferred using multiplication on the left in 
this paper. 

As explained in detail in \cite{sbs2} this 
theory of densely defined Toeplitz operators,  
acting in a Hilbert space,
gives a 
quantization scheme, called {\em Toeplitz
quantization}, that includes Planck's 
constant as well as creation and 
annihilation operators. 
So this is a quantum theory. 

The operators 
$ T_{\theta^{i} \overline{\theta}\,\!^{j}} $ 
were explicitly calculated in Theorem~4.4 in 
\cite{sbs1}, where we had multiplication of the 
symbol on the right in the definition of 
Toeplitz operators. 
Since we have used here multiplication of the symbol 
on the left, we have a different result. 
\begin{theorem}
\label{T-action}
For integers $ i,j,n \ge 0 $ we have 
\begin{equation}
\label{T-theta-i-theta-bar-j}
T_{\theta^{i} \overline{\theta}\,\!^{j}} 
\phi_{n}(\theta) = 
q^{-j n} 
\dfrac{w_{n+i}}{( w_{n} w_{n+i-j} )^{1/2}} 
\phi_{n+i-j} (\theta),
\end{equation}
where for $ m < 0 $ we put 
$ \phi_{m} (\theta) = 0 $ and $ w_{m} = 1 $. 
\end{theorem}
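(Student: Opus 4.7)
The plan is to unwind the definition of the Toeplitz operator, reduce the problem to commuting $\overline{\theta}\,\!^{j}$ past $\theta^{n}$ using the defining relation of the Manin plane, and then read off the projection $P$ from the definition of the sesqui-linear form.

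First I would write
\[
T_{\theta^{i}\overline{\theta}\,\!^{j}}\phi_{n}(\theta)
  = P\!\left(\theta^{i}\overline{\theta}\,\!^{j}\cdot w_{n}^{-1/2}\theta^{n}\right)
  = w_{n}^{-1/2}\, P\bigl(\theta^{i}\overline{\theta}\,\!^{j}\theta^{n}\bigr).
\]
The next step is to move all the $\overline{\theta}$'s to the right of the $\theta$'s using $\theta\overline{\theta}=q\overline{\theta}\theta$, equivalently $\overline{\theta}\theta=q^{-1}\theta\overline{\theta}$. A short induction gives $\overline{\theta}\,\!^{j}\theta^{n}=q^{-jn}\theta^{n}\overline{\theta}\,\!^{j}$, so
\[
\theta^{i}\overline{\theta}\,\!^{j}\theta^{n}=q^{-jn}\,\theta^{n+i}\overline{\theta}\,\!^{j}.
\]
This is precisely where the factor $q^{-jn}$ in \eqref{T-theta-i-theta-bar-j} comes from, and it is the only genuinely $q$-dependent step in the argument.

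Next I would apply the definition of $P$ to the monomial $\theta^{n+i}\overline{\theta}\,\!^{j}$. By \eqref{define-sesqui-form},
\[
\langle \theta^{k},\theta^{n+i}\overline{\theta}\,\!^{j}\rangle
 = w_{k+j}\,\delta_{k,\,n+i-j},
\]
so the series defining $P(\theta^{n+i}\overline{\theta}\,\!^{j})$ collapses to the single term with $k=n+i-j$, provided $n+i-j\ge 0$. This yields
\[
P\bigl(\theta^{n+i}\overline{\theta}\,\!^{j}\bigr)
 = w_{n+i-j}^{-1}\,w_{n+i}\,\theta^{n+i-j}
 = \frac{w_{n+i}}{w_{n+i-j}^{1/2}}\,\phi_{n+i-j}(\theta).
\]
Substituting back and combining the normalization factors produces the stated expression $q^{-jn}\,w_{n+i}/(w_{n}w_{n+i-j})^{1/2}\,\phi_{n+i-j}(\theta)$.

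Finally I would dispose of the boundary case $n+i-j<0$: the index $k$ in the sum defining $P$ ranges over $\mathbb{N}$, so $\delta_{k,n+i-j}$ is identically zero, giving $T_{\theta^{i}\overline{\theta}\,\!^{j}}\phi_{n}=0$, which agrees with the convention $\phi_{m}(\theta)=0$ for $m<0$. The main obstacle is essentially bookkeeping: tracking the $q$-power from the commutation, the weights coming out of the sesqui-linear form, and the normalization factors $w_{n}^{-1/2}$ and $w_{n+i-j}^{1/2}$ so that they assemble into the claimed symmetric denominator $(w_{n}w_{n+i-j})^{1/2}$.
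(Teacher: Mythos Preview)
Your proof is correct. The core computation---commuting $\overline{\theta}\,\!^{j}$ past $\theta^{n}$ to produce the factor $q^{-jn}$, then evaluating the sesqui-linear form to pick out the single surviving term---is exactly what the paper does as well.

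The difference is one of packaging. The paper works through the reproducing kernel \emph{object} $K(\theta,\eta)=\sum_{k}\phi_{k}(\overline{\theta})\otimes\phi_{k}(\eta)$ in an auxiliary pair of variables $\eta,\overline{\eta}$, computing $T_{\theta^{i}\overline{\theta}\,\!^{j}}\phi_{n}(\theta)=\langle K(\theta,\eta),\eta^{i}\overline{\eta}\,\!^{j}\phi_{n}(\eta)\rangle_{\eta}$, then commutes and collapses the sum. You instead apply directly the explicit formula for $P$ given in Section~\ref{setting-section}, which already encodes the same sum over $\phi_{k}$ without the tensor-product machinery. Your route is more elementary and self-contained; the paper's route ties the computation to the reproducing kernel formalism developed in \cite{sbs1}, which is thematically relevant later but not logically necessary for this result.
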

\begin{proof}
We use the reproducing kernel {\em object} 
$ K (\theta, \eta) = 
\sum_{k} \phi_{k} 
(\overline{\theta}) \otimes \phi_{k} (\eta) $, 
which is a tensor but not a {\em function}, to 
calculate the projection operator. 
Here $ \eta, \overline{\eta} $ is another, 
independent pair of variables satisfying 
$ \eta \overline{\eta} = q \overline{\eta}\eta$. 
Also, $ \langle \cdot , \cdot \rangle_{\eta} $ 
denotes the same sesqui-linear form as above, but with
respect to the new variables 
$ \eta, \overline{\eta} $. 
See \cite{sbs1} for more details. 
We calculate 
\begin{align*}
T_{\theta^{i} \overline{\theta}\,\!^{j}} 
 \phi_{n}(\theta)
&=  
\big\langle 
K (\theta, \eta) , 
\eta^{i} \overline{\eta}\,\!^{j} \phi_{n} (\eta)
\big\rangle_{\eta} 
\\
&= 
\big\langle 
\sum_{k} \phi_{k} (\overline{\theta}) \otimes 
\phi_{k} (\eta), 
w_{n}^{-1/2}  \eta^{i} \overline{\eta}\,\!^{j} \eta^{n}
\big\rangle_{\eta} 
\\
&= 
w_{n}^{-1/2} 
\big\langle 
\sum_{k} \phi_{k} (\overline{\theta}) \otimes 
\phi_{k} (\eta), 
q^{-j n} \eta^{i+n} \overline{\eta}\,\!^{j} 
\big\rangle_{\eta} 
\\
&= 
q^{-j n} w_{n}^{-1/2} 
\big\langle 
\sum_{k} 
w_{k}^{-1/2} \eta^{k}, \, 
\eta^{i+n} \overline{\eta}\,\!^{j} 
\big\rangle \phi_{k} (\theta)
\\
&= 
q^{-j n} w_{n}^{-1/2} 
\sum_{k} w_{k}^{-1/2}  \delta_{k+j, i+n} w_{k+j}
\phi_{k} (\theta)
\\
&= 
q^{-j n} w_{n}^{-1/2} w_{n+i-j}^{-1/2} w_{i+n} 
\phi_{n+i-j} (\theta). 
\end{align*}
Even though the sum is over all $ k \in \mathbb{N} $, 
only at most one term is non-zero, namely 
when $ k = n + i - j \ge 0 $. 
\end{proof}
\begin{remark}
So $ T_{\theta^{i} \overline{\theta}\,\!^{j}} $ 
changes the degree of the monomial $ \phi_{n} $
by $ i-j $ or annihilates it.  
Note that this result differs from Theorem~4.4 
of \cite{sbs1} only by the factor of $ q^{-j n} $. 
There are 
two special cases of interest. 
One is the 
{\em annihilation operator} $ T_{\overline{\theta}} $
with anti-holomorphic symbol $ \overline{\theta} $ 
when $ i=0 $ and $ j=1 $. 
And the other case is the {\em creation operator} 
$ T_{\theta} $ with holomorphic symbol $ \theta $ 
when $ i = 1 $ and $ j = 0 $. 
\end{remark}

We turn our attention to the annihilation 
operator about which \eqref{T-theta-i-theta-bar-j} 
says that 
\begin{equation}
\label{T-theta-bar}
   T_{\overline{\theta}} \, \phi_{n}= 
   q^{-n}
   \left( \dfrac{w_{n}}{w_{n-1}} \right)^{1/2}
   \!\!\!\phi_{n-1}
   \quad \mathrm{for~every~integer~} n \ge 1 
\end{equation}
and that 
$ T_{\overline{\theta}} \, \phi_{0} =0 $. 
This defines $ T_{\overline{\theta}} $ 
on the subspace $ \mathcal{P} $, which is
dense in the Hilbert space $ \mathcal{H} $ and 
is invariant under the action of 
$ T_{\overline{\theta}} $. 
Then $ T_{\overline{\theta}} $ can be extended 
in the following way to the dense domain 
$ \mathrm{D} (T_{\overline{\theta}})  $
of elements 
$ \phi = \sum_{n=0}^{\infty} a_{n} \phi_{n} $
such that  
$ \sum_{n=0}^{\infty} a_{n} T_{\overline{\theta}} 
\, \phi_{n} $ 
is convergent in $ \mathcal{H} $, that is to say,
\begin{equation}
\label{define-dom-annihilation}
     \mathrm{D} (T_{\overline{\theta}}):= 
     \Big\{ \phi = \sum_{n=0}^{\infty} 
     a_{n} \phi_{n} \in \mathcal{H} 
     ~\Big|~ 
     \sum_{n=1}^{\infty} 
     |a_{n}|^{2} \, |q|^{-2n}
     \left( 
        \dfrac{w_{n}}{w_{n-1}} 
     \right) 
     < \infty \Big\} 
\end{equation}
and 
$$
T_{\overline{\theta}} \, \phi := 
\sum_{n=1}^{\infty} a_{n} \, q^{-n}
\left( \dfrac{w_{n}}{w_{n-1}} \right)^{1/2}
\phi_{n-1}
$$
for every $ \phi = \sum_{n=0}^{\infty} a_{n} \phi_{n} 
\in \mathrm{D} (T_{\overline{\theta}}) $. 
Actually, by standard functional analysis 
this extension is the closure of 
$ T_{\overline{\theta}} $ defined on the domain 
$ \mathcal{P} $. 

\vskip 0.1cm
Then we have 
$ \mathcal{P} \subsetneq 
\mathrm{D} (T_{\overline{\theta}}) 
\subset \mathcal{H}$. 
The last inclusion is an equality if and only if 
the sequence 
$\{ |q|^{-2 n} \, (w_{n}/w_{n-1}) ~|~ n \ge 1 \}$ 
is bounded if and only if  $ T_{\overline{\theta}} $ 
is a bounded operator. 
Also,  $ T_{\overline{\theta}} $ is compact if and 
only if 
$ \lim_{n \to \infty} |q|^{-2 n} \, (w_{n}/w_{n-1})=0 $. 

The choice $ w_{n} = n! $ for the weights 
is motivated by the
Segal-Bargmann space. (See \cite{barg}.) 
With this choice 
$  T_{\overline{\theta}} \, \phi_{n} = 
q^{-n} \, n^{1/2} \, \phi_{n-1} $ for $ n \ge 1 $. 
If also $ q = 1 $, then 
$  T_{\overline{\theta}} $ is an unbounded operator, 
which is unitarily equivalent to the annihilation 
operator 
$ \partial / \partial z $
in the Segal-Bargmann space.
Another choice is $ w_{n} = 1 $ in which case 
$  T_{\overline{\theta}} $ is a  
{\em weighted backwards shift operator}, a bounded 
operator if and only if $ |q| \ge 1 $.

\section{Coherent States}

We now arrive at our basic definition.

\begin{definition}
A {\em coherent state} for this Toeplitz quantization
of the Manin plane is an eigenvector
of the annihilation 
operator $ T_{\overline{\theta}} $. 
More explicitly, it is a non-zero vector 
$ \phi_{\lambda} \in 
\mathrm{D} (T_{\overline{\theta}}) $ such 
that
$ T_{\overline{\theta}} \, \phi_{\lambda} =
\lambda \phi_{\lambda} $, 
where 
$ \lambda \in \mathbb{C}$ is 
the eigenvalue. 
The set of eigenvalues of $ T_{\overline{\theta}} $ 
is called the {\em phase space}. 
\end{definition}

\begin{remark}
	The terminology `phase space' comes from
	the theory of  
	classical mechanics. 
	But 
	in functional analysis the set of eigenvalues 
	of a densely defined linear operator 
	is called its {\em point spectrum,} at least by 
	some authors.  
	An important feature of this approach 
	is that the (possibly empty!) 
	phase space is characterized purely 
	by properties of the quantum theory, that is to
	say, the phase space is a quantum object. 
    N.B.:  
	We do not start with a phase space and then 
	quantize it. 
\end{remark}

We now find the coherent states by 
taking $ \phi = \sum_{n=0}^{\infty} a_{n} \phi_{n}$,
where the unknown coefficients must satisfy 
$  \sum_{n=0}^{\infty} |a_{n}|^{2} < \infty $,  
and noting that 
$ T_{\overline{\theta}} \, \phi = \lambda \phi $, 
where $ \lambda \in \mathbb{C} $, becomes
$$
T_{\overline{\theta}} \, \phi = 
\sum_{j=1}^{\infty} a_{j} \, q^{-j}
\left( \dfrac{w_{j}}{w_{j-1}} \right)^{1/2}
\!\!\!\!\phi_{j-1} =
\sum_{n=0}^{\infty} \lambda a_{n} \phi_{n}.
$$
Putting $ n = j - 1 $ in the first sum gives 
$$
\sum_{n=0}^{\infty} a_{n+1} \, q^{-(n+1)}
\left( \dfrac{w_{n+1}}{w_{n}} \right)^{1/2}
\!\!\!\!\phi_{n} =
\sum_{n=0}^{\infty} \lambda a_{n} \phi_{n}.
$$
Using orthogonality we see that a necessary and
sufficient condition for this equality 
to hold is that for all integers $ n \ge 0 $ we have 
\begin{equation}
\label{recursion}
a_{n+1} = \lambda \, q^{n+1}
\left( \dfrac{w_{n}}{w_{n+1}} \right)^{1/2} 
\!\!\!\! a_{n}. 
\end{equation}
Therefore, up to a non-zero 
multiplicative constant, there is at
most one coherent state with eigenvalue 
$ \lambda \in \mathbb{C} $. 
In fact, the recursion relation 
\eqref{recursion} is solved explicitly by 
$$
       a_{n} = \lambda^{n} \, q^{n(n+1)/2}
       \left( \dfrac{w_{0}}{w_{n}} \right)^{1/2} 
       \!\!\!\! a_{0}
       \quad \mathrm{for~all~integers~} n \ge 0, 
$$
where $ a_{0} \in \mathbb{C} \setminus \{ 0 \} $ is arbitrary. 
(We exclude $ a_{0} =0 $ since an eigenvector 
by definition must be non-zero.)
Moreover, in the case when $ \lambda \ne 0 $ 
we see that $ a_{n} \ne 0 $ for all $ n \ge 1 $
and therefore 
$ \phi = \sum_{0}^{\infty} a_{n} \phi_{n} $ 
can not be an
element of $ \mathcal{P} $. 
This is why we introduced the 
larger domain $ \mathrm{D} (T_{\overline{\theta}}) $. 
We will return to this point in a moment. 

For convenience we simplify by taking 
$ a_{0} = w_{0}^{-1/2} $ thereby getting 
$$
a_{n} = \lambda^{n} \, q^{n(n+1)/2} \,  w_{n}^{-1/2} 
\quad \mathrm{for~all~integers~} n \ge 0.
$$
And so, up to a multiplicative non-zero constant, 
the coherent state with eigenvalue 
$ \lambda \in \mathbb{C} $ has to be 
\begin{equation}
\label{coherent-state}
\phi_{\lambda} := \sum_{n=0}^{\infty} 
                 \lambda^{n} \, q^{n(n+1)/2}
                 \, w_{n}^{-1/2} \phi_{n}
\end{equation}
{\em provided at least} that the series 
converges in $ \mathcal{H} $. 
And if this series diverges, then there is no
coherent state with eigenvalue equal to $ \lambda $. 
Of course, this series is convergent 
by Hilbert space theory 
if and only if 
\begin{equation}
\label{CS-norm-squared}
|| \phi_{\lambda} ||^{2} = 
\sum_{n=0}^{\infty} |\lambda|^{2n} \, |q|^{n(n+1)}
\, w_{n}^{-1}
< \infty, 
\end{equation}
which is a power series in the variable 
$ |\lambda|^{2} $. 
We let $ w $ denote  
$\{ w_{n} \,|\, n \ge 0 \}$, the sequence of the weights. 
Then, using the theory of power series,  
the radius of convergence $ R_{w} $ of this 
{\em as a power 
series in $ | \lambda | $} is given by the 
formula 
$$
   R_{w} = \Big( 
 \limsup_{n \to \infty}  
  |q|^{(n+1)} w_{n}^{-1/n} \Big)^{-1/2} 
  \!\!= \,\, \Big( 
  \liminf_{n \to \infty} 
  |q|^{-(n+1)} w_{n}^{1/n} \Big)^{1/2}
  \!\!\!\!\!\!, 
$$
where we use the standard conventions 
$ 0^{-1} = \infty $ and $ \infty^{-1} = 0 $. 
So, the formula $ \eqref{coherent-state} $ defines
a vector $ \phi_{\lambda} \in \mathcal{H} $ for
all $ \lambda \in \mathbb{C} $ 
satisfying $ |\lambda| < R_{w} $. 
Also, the series in \eqref{coherent-state} 
diverges if $ |\lambda| > R_{w} $, in which case 
there is no coherent state 
with eigenvalue~$ \lambda $.  
It is worthwhile to note that the infinite series in 
\eqref{CS-norm-squared} converges for some 
complex number $ \lambda $ if and only if 
that series converges at every point on the circle 
$ \lambda \, e^{i \theta} $ of radius $ |\lambda| $, 
where $ \theta \in \mathbb{R} $. 
The associated holomorphic function 
$$
   f(z):= \sum_{n=0}^{\infty} z^{n} \, |q|^{n(n+1)}
   \, w_{n}^{-1},
$$
where $ z $ is a complex variable,  
has radius of convergence $ R_{w}^{2} $.
Suppose that $ R_{w} < \infty $. 
Since this series has 
positive coefficients, it converges absolutely 
at every point on the circle 
$ |z| = R_{w}^{2} $ 
if and only if it converges absolutely 
at one point on that circle if and only if 
it converges (absolutely of necessity) 
at $ z = R_{w}^{2} $.  

It is not enough that $ \phi_{\lambda} $ 
is an element of the Hilbert space $ \mathcal{H} $, 
even though that is a necessary condition.  
It must also belong to 
$\mathrm{D} (T_{\overline{\theta}})$, 
the domain of the annihilation operator,  
as noted earlier, 
and satisfy $ T_{\overline{\theta}} \, \phi_{\lambda} = 
\lambda \, \phi_{\lambda} $. 
We now explicitly prove this. 
\begin{prop}
	For all $ \lambda $ satisfying 
	\eqref{CS-norm-squared} (in particular, if
	$ |\lambda| < R_{w} $) we have that  
	$ \phi_{\lambda} \in 
	\mathrm{D} (T_{\overline{\theta}}) $. 
	Moreover, 
	$ T_{\overline{\theta}} \, \phi_{\lambda} = 
	\lambda \, \phi_{\lambda} $ and 
	$ \phi_{\lambda} \ne 0 $
	for all such $ \lambda \in \mathbb{C} $.
\end{prop}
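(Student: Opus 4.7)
The plan is to verify the three assertions directly from the explicit formula $a_{n} = \lambda^{n}\, q^{n(n+1)/2}\, w_{n}^{-1/2}$ for the coefficients of $\phi_{\lambda} = \sum_{n=0}^{\infty} a_{n}\phi_{n}$. The non-triviality $\phi_{\lambda}\ne 0$ is immediate: the coefficient $a_{0}=w_{0}^{-1/2}$ is nonzero, so the component along the unit vector $\phi_{0}$ does not vanish.

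Next, for membership in $\mathrm{D}(T_{\overline{\theta}})$, I would simply plug $|a_{n}|^{2} = |\lambda|^{2n}|q|^{n(n+1)}w_{n}^{-1}$ into the series appearing on the right-hand side of \eqref{define-dom-annihilation}. The exponents of $|q|$ combine as $n(n+1)-2n = n(n-1)$, and after the shift $m=n-1$ the series becomes
\begin{equation*}
\sum_{n=1}^{\infty}|a_{n}|^{2}|q|^{-2n}\frac{w_{n}}{w_{n-1}}
= |\lambda|^{2}\sum_{m=0}^{\infty}|\lambda|^{2m}|q|^{m(m+1)}w_{m}^{-1}
= |\lambda|^{2}\,\|\phi_{\lambda}\|^{2}.
\end{equation*}
Since \eqref{CS-norm-squared} is assumed finite, this quantity is finite as well, so $\phi_{\lambda}\in\mathrm{D}(T_{\overline{\theta}})$.

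Finally, for the eigenvalue equation, I would apply $T_{\overline{\theta}}$ to $\phi_{\lambda}$ term by term using \eqref{T-theta-bar}; this is legitimate because $T_{\overline{\theta}}$ is closed and the series $\sum a_{n}T_{\overline{\theta}}\phi_{n}$ converges (by the previous step). After reindexing $m=n-1$, each coefficient reads
\begin{equation*}
a_{m+1}\,q^{-(m+1)}\left(\frac{w_{m+1}}{w_{m}}\right)^{1/2},
\end{equation*}
and the recursion \eqref{recursion}, which was the very relation defining the $a_{n}$'s, collapses this expression to $\lambda a_{m}$. Summing then yields $T_{\overline{\theta}}\phi_{\lambda} = \lambda\phi_{\lambda}$.

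There is no real obstacle here, only the need for careful bookkeeping of the $q$-exponents (where the factor $q^{n(n+1)/2}$ was precisely chosen so that the recursion \eqref{recursion} is satisfied) and of the shift $n\mapsto n-1$ that transforms the norm square into the domain-series and vice versa. The whole argument is essentially a consistency check that the formula \eqref{coherent-state} solves \eqref{recursion}, packaged with the remark that the same bound controlling $\|\phi_{\lambda}\|^{2}$ also controls the domain condition up to a factor of $|\lambda|^{2}$.
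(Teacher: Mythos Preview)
Your proof is correct and follows essentially the same route as the paper: the domain condition is reduced to $|\lambda|^{2}\,\|\phi_{\lambda}\|^{2}<\infty$ via the same exponent bookkeeping and index shift, non-triviality comes from $a_{0}=w_{0}^{-1/2}\ne 0$, and the eigenvalue equation is checked term by term. The only cosmetic difference is that you appeal to the recursion \eqref{recursion} at the end, whereas the paper unwinds the coefficients explicitly; these are the same computation.
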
 
\begin{proof}
According to \eqref{define-dom-annihilation} and 
\eqref{coherent-state}, proving  
$ \phi_{\lambda} \in 
\mathrm{D} (T_{\overline{\theta}}) $
is equivalent to showing that
the following expression is finite:
\begin{align*}
	\sum_{n=1}^{\infty} 
	|q|^{-2 n}
	&\left( \dfrac{w_{n}}{w_{n-1}} \right) 
	\big| 
	\lambda^{n} q^{n(n+1)/2} w_{n}^{-1/2} 
	\big|^{2} 
	= \sum_{n=1}^{\infty} 
	\dfrac{|q|^{-2n + n(n+1)}}{w_{n-1}} |\lambda|^{2n}
	\\
	&= |\lambda|^{2}
	\sum_{n=1}^{\infty} 
	\dfrac{|q|^{n(n-1)}}{w_{n-1}}
    |\lambda|^{2(n-1)} 
    = |\lambda|^{2} \, 
    || \phi_{\lambda} ||^{2}
	< \infty, 
\end{align*}
where the inequality holds 
because $ \lambda $ satisfies 
\eqref{CS-norm-squared}. 
Since 
$ \langle \phi_{0}, \phi_{\lambda} \rangle = w_{0}^{-1/2} \ne 0 $ by 
\eqref{coherent-state}, we see that 
$ \phi_{\lambda} \ne 0 $. 
Finally, we prove $ \phi_{\lambda} $ is an 
eigenvector whose eigenvalue is $ \lambda $  
by calculating 
\begin{align*}
T_{\overline{\theta}} \, \phi_{\lambda} &= 
T_{\overline{\theta}} \, \left( 
\sum_{n=0}^{\infty} \lambda^{n} \, q^{n(n+1)/2}
 \, w_{n}^{-1/2} \, 
\phi_{n} \right) 
\\
&= 
\sum_{n=1}^{\infty} \lambda^{n} \, q^{n(n+1)/2}
\, w_{n}^{-1/2} q^{-n}
\left( \dfrac{w_{n}}{w_{n-1}} \right)^{1/2} \!\!\!\!
\phi_{n-1} 
\\
&= \lambda \sum_{n=1}^{\infty} 
\lambda^{n-1} \, q^{n(n-1)/2}
w_{n-1}^{-1/2} \, \phi_{n-1}
\\
&=
\lambda \, \phi_{\lambda}, 
\end{align*}
which finishes what we wanted to prove. 
\end{proof}

For $ R_{w} > 0 $ 
the phase space consists of the open ball 
$ |\lambda| < R_{w}$ 
plus possibly the points on its boundary 
for the case $ R_{w} < \infty $. 
The case $ R_{w} = 0 $ will be discussed in a moment. 
Clearly, \eqref{CS-norm-squared} converges 
at some point on the boundary if and only if 
it converges at all points on the boundary. 
We let $ B_{w} $ denote  
the phase space; this is either 
the open or closed ball of radius $ R_{w} $. 
It seems reasonable to conjecture that the 
spectrum of $ T_{\overline{\theta}} $ is the closure
of $ B_{w} $, that is,  
$ \mathrm{Spec} (T_{\overline{\theta}}) = 
\overline{B_{w}}$. 
We note the possibility that some elements on 
the boundary of $ B_{w} $ may not be 
eigenvalues.  

The case $ w_{n} = n! $ and $ |q| = 1 $
arises as noted above in analogy with 
the situation in the Segal-Bargmann space. 
Using the Stirling approximation for $ n! $
we see 
for this case that
$$
   R_{w} = 
   \lim_{n \to \infty}  
   \left( 
   (2 \pi)^{1/2} n^{n + 1/2} e^{-n}
   \right)^{1/2 n} 
   = 
   \lim_{n \to \infty} 
    (2 \pi)^{1/4n} n^{1/2 + 1/4n} e^{-1/2} 
    = \infty.
$$
Thus the phase space is $ \mathbb{C} $, and the
annihilation operator has point 
spectrum equal to the entire complex plane. 

The case $ w_{n} = 1 $ for all integers $ n \ge 0 $
and $ q = 1 $
clearly leads to $ R_{w} = 1$, which is the
spectral radius of the backwards shift operator.  
By picking other values for the weights
and for $ q $ we can find 
any value of $ R_{w} $ in $ [0, \infty] $. 

In this paper the Manin plane 
$ \mathbb{C} Q_{q} (\theta, \overline{\theta})$ 
plays the role of the `classical phase 
space' 
that is being quantized, but the 
phase space $ B_{w} $ determined by 
the quantum theory 
is quite unlike 
$ \mathbb{C} Q_{q} (\theta, \overline{\theta})$ 
unless $ R_{w} = \infty $ and $ q = 1 $. 
The fact that the Manin plane 
$ \mathbb{C} Q_{q} (\theta, \overline{\theta})$
has the ball $ B_{w} $ as its phase space is quite 
surprising. 
Of course, \eqref{coherent-state} always 
converges for $ \lambda =0 $. 
But in the case when $ R_{w} = 0 $ that is the
only complex number $ \lambda $ for which 
\eqref{coherent-state} converges and so the
phase space has exactly one point, a rather
curious quantum situation with a
trivial phase space. 
This motivates the following idea. 
\begin{definition}
\label{define-extreme-quantum}
We say that a quantum 
theory which has exactly 
one annihilation operator $ A $ 
is an {\em extreme quantum theory} if the point
spectrum of $ A $ has at most one point, that is 
to say, the corresponding phase space 
has at most one point. 
\end{definition}

\begin{remark}
Since one typically has $ \ker \, A \ne 0 $ 
for an annihilation operator $ A $, the complex
number $ 0 $ is then in the point spectrum 
of $ A $. 
While it is mathematically possible to 
have $ \ker \, A = 0 $, this may be
undesirable from a physics viewpoint. 
Until Section~\ref{comparison-section} of this paper 
we will assume that $ R_{w} > 0 $. 
So the interior of the phase space $ B_{w} $
is  non-empty. 
However, the case $ R_{w} = 0 $ is fascinating, 
though we currently have few mathematical tools
for studying it. 
For this reason, and this reason alone, 
we exclude it from consideration here. 
\end{remark}

\section{Resolution of the Identity}
\label{section-resolution-identity}

In order to study the phase space $ B_{w} $ we note 
that the coherent states define a parametrized 
family of rank-one projection operators  
$ | \phi_{\lambda} \rangle \langle \phi_{\lambda} | $
in Dirac notation, where $ \lambda \in B_{w} $. 
We might want to find a positive Borel measure 
$ \rho $ on $ B_{w} $ such that
\begin{equation}
\label{resolution-of-id}
I = \int_{B_{w}} d \rho(\lambda) \,  
 | \phi_{\lambda} \rangle \langle \phi_{\lambda} |
\end{equation}
where $ I $ is the identity operator acting 
in $ \mathcal{H} $. 
This is a giant step beyond 
Toeplitz quantization whose virtue 
is that it does not use a measure. 
Equation \eqref{resolution-of-id} is called 
the {\em resolution of the identity}, 
and the integral in it is to be understood 
with respect to the weak operator topology, which means
that for all $ \phi, \psi \in \mathcal{H} $ we have
\begin{equation}
\label{i-p-resolution}
 \langle \phi, \psi \rangle 
= \int_{B_{w}} d \rho(\lambda) \, 
  \langle \phi, 
 \phi_{\lambda} \rangle 
 \langle \phi_{\lambda} , \psi \rangle. 
 \end{equation}
The integrand is a measurable function 
of $ \lambda \in B_{w} $. 
(See Remark~\ref{remark-section-8}
for why this is true.)
We are requiring here that it is also  
absolutely integrable 
with respect to the unknown measure $ \rho $ 
for all $ \phi, \psi \in \mathcal{H} $. 
While this is the standard definition 
of a resolution of the identity, 
the reader should be aware that this is an 
extremely strong condition on $ \rho $. 
In particular, 
this imposes a lot of necessary conditions on 
the measure $ \rho $. 
For example, by 
putting $ \phi = \psi = \phi_{n} $, 
a standard basis element, we see that
\begin{equation}
\label{necessary-condition-on-rho}
\int_{B_{w}} d \rho(\lambda) 
\, |\lambda|^{2n} |q|^{n(n+1)} w_{n}^{-1} = 1
\quad \mathrm{for~every~integer~}  n \ge 0 . 
\end{equation}
The case $ n = 0 $ says that $ w_{0}^{-1} \rho $ 
is a probability measure, that is,
$E \mapsto  \int_{E} d \rho(\lambda) \, w_{0}^{-1}  $ 
for Borel subsets $ E \subset B_{w} $
is a probability measure on the phase space $ B_{w} $. 
Also \eqref{necessary-condition-on-rho} gives us
necessary conditions on the even complex 
moments of that probability measure, namely
for every integer $ n \ge 0 $ we must have 
$$
\big|
\int_{B_{w}} d \rho(\lambda) w_{0}^{-1} \lambda^{2n}
\big| 
 \le  
\int_{B_{w}} d \rho(\lambda) w_{0}^{-1} |\lambda|^{2n} 
= |q|^{-n(n+1)} w_{n} w_{0}^{-1}. 
$$
It may well be that no such measure $ \rho $ exists, 
but even so the Toeplitz quantization has 
its own intrinsic interest.
The failure of the necessary conditions would imply
that is the case. 
Furthermore, even if such a measure exists, it may 
not be unique. 

We expand any pair 
$ \phi, \psi \in \mathcal{H} $
in the orthonormal basis to get
 $ \phi = \sum_{j} a_{j} \phi_{j}$
and $ \psi = \sum_{k} b_{k} \phi_{k}$ 
with complex numbers $ a_{j}, b_{k} $ 
satisfying 
$ || \psi ||^{2} = \sum_{j} |a_{j}|^{2} < \infty $ 
and $ || \phi ||^{2} = \sum_{k} |b_{k}|^{2} < \infty $. 
So we obtain  
\begin{align}
\nonumber 
\int_{B_{w}} \!\! d \rho(\lambda) \, 
\langle \phi, 
\phi_{\lambda} \rangle 
\langle \phi_{\lambda} , \psi &\rangle 
= 
\int_{B_{w}} \!\! d \rho(\lambda) \, 
\sum_{j,k=0}^{\infty} 
a_{j}^{*} \lambda^{j} \, q^{j(j+1)/2}
w_{j}^{-1/2}
b_{k} \lambda^{*k} \, q^{*k(k+1)/2}
w_{k}^{-1/2}
\nonumber
\\
\label{expanding-out}
&= 
\sum_{j,k=0}^{\infty} 
a_{j}^{*} q^{j(j+1)/2} w_{j}^{-1/2}
b_{k} q^{*k(k+1)/2} w_{k}^{-1/2} 
\int_{B_{w}} d \rho(\lambda) \, 
\lambda^{j} \lambda^{*k}. 
\end{align}
The interchange of infinite sum and integral 
in the last equality 
needs to be justified. 
This is a delicate situation, which we now 
examine. 
We first estimate the partial sums of the infinite 
sum under the integral sign. 
We take integers $ N,M \ge 0 $ getting 
\begin{align*}
\Big| 
&\sum_{j,k=0}^{N,M} 
a_{j}^{*} \lambda^{j} q^{j(j+1)/2}
w_{j}^{-1/2}
b_{k} \lambda^{k*} q^{*k(k+1)/2}
w_{k}^{-1/2} 
\Big| 
\\
&\le 
\sum_{j,k=0}^{N,M} 
|a_{j}| |\lambda|^{j} |q|^{j(j+1)/2}
w_{j}^{-1/2}
|b_{k}| |\lambda|^{k} |q|^{k(k+1)/2}
w_{k}^{-1/2}
\\
&\le 
\sum_{j=0}^{\infty} 
|a_{j}| |\lambda|^{j} |q|^{j(j+1)/2}
w_{j}^{-1/2}
\sum_{k=0}^{\infty} 
|b_{k}| |\lambda|^{k} |q|^{k(k+1)/2}
w_{k}^{-1/2}
\\
&\le 
\left( 
\sum_{j=0}^{\infty} 
|a_{j}|^{2} 
\right)^{1/2}
\left( 
\sum_{k=0}^{\infty} 
|b_{k}|^{2}
\right)^{1/2}
\left(  
\sum_{l=0}^{\infty}
|\lambda|^{2l} |q|^{l(l+1)} w_{l}^{-1}
\right)^{2/2} \!\!\! 
= 
|| \phi || \, || \psi || \, || \phi_{\lambda} ||^{2}.
\end{align*}
We got the last estimate by applying 
Cauchy-Schwarz twice, which accounts for the 
exponent $ 2(1/2) = 2/2 $ in the last factor. 
To apply the 
Lebesgue  dominated convergence theorem 
now we would need to prove that 
$ || \phi_{\lambda} ||^{2} $ as a function 
of $ \lambda \in B_{w} $ is integrable 
with respect to the measure $ \rho $. 
But that is false! 
In fact,
\begin{align}
\label{divergent-integral}
\int_{B_{w}} d \rho(\lambda) \, 
|| \phi_{\lambda} ||^{2} &= 
\int_{B_{w}} d \rho(\lambda) \, 
\sum_{n=0}^{\infty} |\lambda|^{2n} \, |q|^{n(n+1)}
\, w_{n}^{-1} 
\\
&= \sum_{n=0}^{\infty} \int_{B_{w}} d \rho(\lambda) \, 
|\lambda|^{2n} \, |q|^{n(n+1)}
\, w_{n}^{-1} 
= \sum_{n=0}^{\infty} 1 = \infty, 
\nonumber 
\end{align}
where we used \eqref{CS-norm-squared}, 
the monotone convergence theorem 
and \eqref{necessary-condition-on-rho} in 
that order. 
Therefore the 
interchange does not follow so easily from the 
Lebesgue  dominated convergence theorem. 
One can change this argument 
in standard ways 
so that it only applies 
to $ \phi, \psi $ in some dense subspace, and 
thereby justify the interchange for 
such $ \phi, \psi $.
The easiest change by far of this type 
is to suppose that 
$ \phi, \psi \in \mathcal{P}$ so that the infinite sum 
collapses to a finite sum and therefore
\eqref{expanding-out} is trivially true. 

We next assume that the measure $ \rho $ is radial 
and absolutely continuous with respect to 
Lesbegue measure on $ B_{w} $. 
So it is equal to $ \rho (r) \, r d r d \alpha $ 
in the standard polar coordinates $ r, \alpha $. 
By a common abuse of notation 
we also let $ \rho $ denote here a measurable function 
$ \rho : [0, R_{w}] \to [0, \infty) $. 
The meaning of $ \rho $ will be clear 
from context. 
Radial symmetry is 
an enormously severe restriction  which 
could well eliminate from consideration 
many interesting cases which deserve further 
study. 
In particular if $ B_{w} $ contains 
its boundary $ \partial B_{w} $, this restriction 
together with  
the absolute continuity implies 
that $ \rho(\partial B_{w}) = 0 $.  
We make it in order to facilitate calculations 
such as the following which continues from 
\eqref{expanding-out}.

We assume that 
$ \phi, \psi \in \mathcal{P}$ and calculate that 
\begin{align*}
\int_{B_{w}} \!\! &d \rho(\lambda) \, 
\langle \phi, 
\phi_{\lambda} \rangle 
\langle \phi_{\lambda} , \psi \rangle = 
\sum_{j,k=0}^{\infty} \! 
a_{j}^{*} q^{j(j+1)/2} w_{j}^{-1/2}
b_{k} q^{*k(k+1)/2} w_{k}^{-1/2} 
\int_{B_{w}} \!\! d \rho(\lambda) \, 
\lambda^{j} \lambda^{k*}
\\
&=
\sum_{j,k=0}^{\infty} 
a_{j}^{*} q^{j(j+1)/2} w_{j}^{-1/2}
b_{k} q^{*k(k+1)/2} w_{k}^{-1/2} 
\int_{0}^{2\pi} d \alpha \int_{0}^{R_{w}} 
r \, dr \rho(r) \, r^{j+k} e^{i (j-k) \alpha}
\\
&= 
\sum_{j=0}^{\infty} 
a_{j}^{*}  b_{j} |q|^{j(j+1)} w_{j}^{-1} \, 
2 \pi \int_{0}^{R_{w}} 
dr \rho(r) \, r^{2j+1}. 
\end{align*}
By the resolution of the identity  
\eqref{i-p-resolution} this has to be equal to 
$ \langle \phi, \psi \rangle = 
\sum_{j=0}^{\infty}  a_{j}^{*}  b_{j}$ 
for all $ \phi, \psi \in \mathcal{P} $, that is,
for all sequences $ \{ a_{j} \,|\, j \ge 0 \} $ and 
$ \{ b_{j} \,|\, j \ge 0 \}$ with only finitely 
many non-zero terms. 
Therefore, by taking 
$ \phi = \psi = \phi_{n} \in \mathcal{P} $, 
we have the following result. 

\begin{theorem}
Suppose that the resolution of the identity 
\eqref{resolution-of-id} holds for a
measure $ \rho $ 
on $ B_{w} $ which is radial 
and absolutely continuous with respect to 
Lebesgue measure.  
Then necessary 
conditions on the `radial' function 
$ \rho : [0, R_{w}] \to [0, \infty) $ are: 
\begin{equation}
\label{first-rho-condition}
\int_{0}^{R_{w}} d r \, \rho(r) \, r^{2j+1} 
=  |q|^{-j(j+1)} \, w_{j}/2 \pi 
\quad \mathrm{for~all~} j \ge 0.  
\end{equation}
\end{theorem}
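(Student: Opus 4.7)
The plan is to exploit the explicit expansion of $\int_{B_w} d\rho(\lambda) \, \langle \phi, \phi_\lambda \rangle \langle \phi_\lambda, \psi \rangle$ already carried out in the text for $\phi, \psi \in \mathcal{P}$. After performing the angular integration, orthogonality of the characters $e^{i(j-k)\alpha}$ on $[0,2\pi]$ kills all the cross terms $j\ne k$, so the double sum in \eqref{expanding-out} collapses to
$$
\sum_{j=0}^\infty a_j^{*} b_j \, |q|^{j(j+1)} w_j^{-1} \cdot 2\pi \int_0^{R_w} dr \, \rho(r) \, r^{2j+1}.
$$
By the resolution of the identity \eqref{i-p-resolution}, this quantity must equal $\langle \phi, \psi \rangle = \sum_j a_j^{*} b_j$ for every choice of $\phi, \psi \in \mathcal{P}$.

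To extract the stated necessary condition, I would specialize to $\phi = \psi = \phi_n$ for each fixed integer $n \ge 0$. Then the coefficient sequences are $a_j = b_j = \delta_{j,n}$, so both sides collapse to a single term: the left side reduces to $|q|^{n(n+1)} w_n^{-1} \cdot 2\pi \int_0^{R_w} dr \, \rho(r) \, r^{2n+1}$ while the right side is $\langle \phi_n, \phi_n \rangle = 1$. Solving for the integral and relabeling $n$ as $j$ yields the claimed equality.

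There is essentially no obstacle to overcome at this stage, because the delicate interchange of summation and integration has already been handled by restricting to $\phi, \psi \in \mathcal{P}$, where the expansions are finite sums. The truly subtle analytic point, namely that one cannot extend this argument to arbitrary $\phi, \psi \in \mathcal{H}$ via dominated convergence (as shown by the divergent integral in \eqref{divergent-integral}), is orthogonal to the present theorem, which only asserts \emph{necessary} radial-moment conditions. The content of the theorem is thus simply the specialization of the identity \eqref{expanding-out} to orthonormal basis elements, which uniquely determines the radial moments of $\rho$.
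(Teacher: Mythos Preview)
Your proposal is correct and follows essentially the same approach as the paper: the paper likewise expands the resolution of the identity for $\phi,\psi\in\mathcal{P}$, uses the radial form $d\rho(\lambda)=\rho(r)\,r\,dr\,d\alpha$ to reduce the double sum via the angular integration to a diagonal sum, and then specializes to $\phi=\psi=\phi_n$ to read off the moment condition. Your commentary on why the restriction to $\mathcal{P}$ sidesteps the dominated-convergence issue also matches the paper's discussion preceding the theorem.
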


\begin{remark}
We change variables in the integral 
\eqref{first-rho-condition} using
$ t = r^{2} $ getting the 
equivalent conditions 
\begin{equation}
\label{rho-condition}
\int_{0}^{R_{w}^{2}} d t \, \rho (t^{1/2}) \, t^{j} = 
|q|^{-j(j+1)} \, w_{j}/\pi 
\quad \mathrm{for~all~} j \ge 0. 
\end{equation}
To find a Borel measure on an 
interval with prescribed moments
as in \eqref{rho-condition} 
is either a Hausdorff moment problem 
if $ 0 < R_{w} < \infty$ or a Stieljes 
moment problem if $ R_{w} = \infty $. 
These problems have been well studied 
in classical analysis 
as to the existence and uniqueness 
of solutions.  
(See the text \cite{schmudgen}.)
For example, taking the case $ w_{j} = j! $ 
and $ |q| =1 $ 
(for which $ R_{w} = \infty $ as we have already seen), 
we can then take $ \rho(r) = \pi^{-1} \exp(-r^{2}) $ 
or, equivalently $ \rho(t^{1/2}) = \pi^{-1} e^{-t} $, 
by standard identities from integral calculus. 
See \cite{dey} for other examples in a 
non-commutative setting. 
Of course, after successfully finding a function 
$ \rho $ solving the moment problem  
\eqref{first-rho-condition}, one still has to check 
that the possibly stronger condition 
\eqref{i-p-resolution} holds. 
\end{remark}

Also, Equation~\eqref{resolution-of-id}, if it holds, 
allows us to identify readily all vectors
$ \psi \in \mathcal{H} $ that are orthogonal 
to all of the coherent states, which means that  
$ \langle \phi_{\lambda}, \psi \rangle = 0 $ 
for all $ \lambda \in B_{w} $. 
One simply uses the equivalent equation 
\eqref{i-p-resolution}
to see that $ \langle \phi , \psi \rangle =0 $
for all $ \phi \in \mathcal{H} $, implying 
 $ \psi = 0 $. 
Another way of saying this is that 
$ \mathrm{span} 
\{ \phi_{\lambda} ~|~ \lambda \in B_{w} \} $ 
is a dense subspace  
of $ \mathcal{H} $ provided that 
\eqref{resolution-of-id} holds.

\section{Time Evolution}

The degree of the homogeneous elements 
can be used to define an operator $ N $
that for all integers $ n \ge 0 $ satisfies
$$
   N \phi_{n} = \deg (\phi_{n}) \, \phi_{n}
              = n \, \phi_{n}. 
$$
By standard techniques in functional analysis 
this has a unique extension as an unbounded, 
self-adjoint linear operator acting in 
a dense domain of the Hilbert space
$ \mathcal{H} $. 
One says that $ N $ is the {\em number operator}. 
It also serves as a quantum Hamiltonian. 
The time evolution unitary group generated by $ N $
is $ e^{- i t N} $, where $ t \in \mathbb{R} $ is 
interpreted as the (dimensionless!) time. 
Then we have immediately by the functional calculus 
that
$$
    e^{- i t N} \, \phi_{n} = e^{- i t n} \, \phi_{n}
    \quad \mathrm{for~all~integers~} n \ge 0. 
$$
Consequently, the time evolution of a coherent state 
$ \phi_{\lambda} $ for $ \lambda \in B_{w} $ 
is given by
\begin{align*}
 e^{- i t N} \, \phi_{\lambda} &= 
 \sum_{n=0}^{\infty} \lambda^{n} \, q^{n(n+1)/2} \, 
  w_{n}^{-1/2} 
  e^{- i t N} \, \phi_{n}
\\
&= \sum_{n=0}^{\infty} \lambda^{n} \, q^{n(n+1)/2} \,
w_{n}^{-1/2} 
e^{- i t n} \, \phi_{n}
\\
&=
\sum_{n=0}^{\infty} (\lambda e^{- i t })^{n} 
\, q^{n(n+1)/2} \, w_{n}^{-1/2}  \, \phi_{n}
\\
&=
\phi_{\lambda e^{- i t }}.
\end{align*}
This shows that coherent states evolve under the 
flow of the unitary group $ e^{- i t N} $ 
into coherent states. 
Also the induced evolution in the phase space
in time $ t $ 
from the initial condition $ \lambda $
is $ \lambda \mapsto \lambda e^{- i t }$,
which is motion in a circle of radius $ |\lambda| $ 
centered at the origin. 
So the operator  $ N $ generates
a flow in the space of coherent states 
and hence in the phase space.
Notice that the latter flow 
does indeed leave the phase space 
$ B_{w} $ invariant. 

It is curious that the 
(appropriately normalized) 
quantum harmonic oscillator 
is unitarily equivalent to $ N + \frac{1}{2} I $, 
since neither the Manin plane nor the 
annihilation operator $ T_{\overline{\theta}} $ 
have anything to do with it {\em a priori}. 
In fact, the phase space of the 
quantum harmonic oscillator is 
the entire complex plane $ \mathbb{C} $, while 
for many choices of the weights the phase 
space of the Manin plane is 
$ B_{w} \subsetneq \mathbb{C} $. 
For such choices of the weights, clearly 
the Toeplitz quantization of the Manin plane 
is not equivalent to the 
quantum harmonic oscillator.

\section{Coherent State Transform}
\label{section-ch-transform}

The coherent states can be used to define 
a transform from the Hilbert space $ \mathcal{H} $
to a space of anti-holomorphic functions 
whose common domain is the interior of 
the phase space. 
For this we introduce the notations 
$ D_{w}:= \{ \lambda \in \mathbb{C}~|~ 
|\lambda| < R_{w} \} $ and 
the complex vector space 
$$
  \mathcal{A} (D_{w}):= 
  \{ f: D_{w} \to \mathbb{C} ~|~ 
  f \mathrm{~is~anti-holomorphic} \}.
$$
We recall that the open ball $ D_{w} $ is 
a subset of the phase space $ B_{w} $. 
We have the following standard definition. 

\begin{definition}
For $ \psi \in \mathcal{H} $ we 
define $ C \psi $, the 
   {\em coherent state transform of $ \psi $}, by 
$$
C \psi (\lambda):= 
\langle \phi_{\lambda} , \psi \rangle_{\mathcal{H}} 
\quad \mathrm{for~all~} \lambda \in D_{w}. 
$$  
\end{definition}

\begin{remark}
	So, $ C\psi : D_{w} \to \mathbb{C} $. 
	Also, putting $ \psi = \sum_{k} c_{k} \phi_{k} $
	with $ \sum_{k} |c_{k}|^{2} < \infty $, 
	we see that
	$$
	C \psi (\lambda) = 
	\langle \phi_{\lambda} , \psi \rangle_{\mathcal{H}}  =
	\sum_{k} \lambda^{* k} \, q^{*k(k+1)/2} \,
	w_{k}^{-1/2} c_{k},
	$$
	a power series 
	which is clearly anti-holomorphic in
	$ \lambda \in D_{w} $.
	Hence, $ C \psi \in \mathcal{A}(D_{w}) $. 
	Moreover, the mapping 
	$ C : \mathcal{H} \to \mathcal{A}(D_{w}) $ 
    given by $ \psi \mapsto C \psi $ 
	is linear.
\end{remark}

We now would like to find an inner product 
on the range of $ C $ so that 
the coherent state transform $ C $ is unitary. 
Of course, if $ C $ is injective, this can be 
done in a unique and trivial way. 
The point is to find an `intrinsic' definition
of that inner product, that is, a way of defining 
a subspace $ \mathcal{C} $ of 
$ \mathcal{A}(B_{w}) $, then making $ \mathcal{C} $
into a Hilbert space via an inner product on it  
and finally showing that the range 
of $ C $ is $ \mathcal{C} $ and that $ C $ is unitary. 
The next lemma is well known, but bears on the 
present discussion.

\begin{lemma}
	$ C $ is injective if and
	only if 
	$ \mathrm{span} 
	\{ \phi_{\lambda} \,|\, \lambda \in D_{w} \} $ 
	is dense 
	in $ \mathcal{H} $
\end{lemma}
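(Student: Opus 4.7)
The plan is to exploit the obvious identity $\ker C = \bigl(\mathrm{span}\{\phi_\lambda \mid \lambda \in D_w\}\bigr)^{\perp}$ together with the standard Hilbert space fact that a subspace of $\mathcal{H}$ is dense if and only if its orthogonal complement is trivial.

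First I would unwind the definition: by definition of $C$, the condition $C\psi = 0$ (as an element of $\mathcal{A}(D_w)$) means $\langle \phi_\lambda, \psi\rangle_{\mathcal{H}} = 0$ for every $\lambda \in D_w$. By sesquilinearity of the inner product and the fact that orthogonality to each generator of a subspace is equivalent to orthogonality to its linear span, this is in turn equivalent to $\psi \perp \mathrm{span}\{\phi_\lambda \mid \lambda \in D_w\}$. Since the inner product is continuous, $\psi$ is orthogonal to the span if and only if it is orthogonal to its closure. Hence
\[
\ker C \;=\; \bigl(\mathrm{span}\{\phi_\lambda \mid \lambda \in D_w\}\bigr)^{\perp}
\;=\; \bigl(\overline{\mathrm{span}}\{\phi_\lambda \mid \lambda \in D_w\}\bigr)^{\perp}.
\]

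Next I would conclude both directions. For the forward direction, if $C$ is injective then the right-hand side above is $\{0\}$, and a closed subspace of a Hilbert space whose orthogonal complement is $\{0\}$ must be all of $\mathcal{H}$; this is the content of the orthogonal decomposition $\mathcal{H} = V \oplus V^{\perp}$ applied to $V = \overline{\mathrm{span}}\{\phi_\lambda\}$. Therefore the span is dense. Conversely, if the span is dense, then its orthogonal complement is $\{0\}$ (the only vector orthogonal to a dense set is zero, by continuity of the inner product and taking a sequence from the dense set converging to $\psi$ itself, giving $\|\psi\|^2 = 0$), so $\ker C = \{0\}$ and $C$ is injective.

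There is no real obstacle here; the statement is essentially a restatement of the duality between the kernel of the analysis map $\psi \mapsto \langle \phi_\lambda, \psi\rangle$ and the closed span of the analyzing family. The only items that need to be invoked are the orthogonal decomposition theorem and the continuity of the inner product, both of which are valid in the completed Hilbert space $\mathcal{H}$ introduced in Section~\ref{setting-section}. Nothing specific to the Manin plane, the weights $w_n$, or the parameter $q$ enters the argument; the proof is the standard one for any family of coherent vectors in a Hilbert space.
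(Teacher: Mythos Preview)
Your proof is correct and follows essentially the same approach as the paper: both arguments rest on the identity $\ker C = \bigl(\mathrm{span}\{\phi_\lambda \mid \lambda \in D_w\}\bigr)^{\perp}$ together with the basic Hilbert space fact that a subspace is dense if and only if its orthogonal complement is trivial. The paper treats the two implications separately rather than first isolating the kernel identity, but the content is the same.
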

\begin{proof}
	$ \Rightarrow $: 
	Suppose that $ C $ is injective. 
	Take $ \psi \in 
	\left(  \mathrm{span} 
	\{ \phi_{\lambda} ~|~ \lambda \in D_{w} \} \right) ^{\perp} $. 
	It suffices to prove $ \psi = 0 $.
	We have 
	$ C \psi (\lambda) =
	 \langle \phi_{\lambda} , \psi \rangle = 0 $ 
	for all $ \lambda \in D_{w} $. 
	Therefore, $ C \psi = 0 $. 
	And then, by the 
	hypothesis that $ C $ is injective, we see 
	that $ \psi = 0 $. 
	
	$ \Leftarrow $: Suppose that 
	$ \mathrm{span} 
	\{ \phi_{\lambda} ~|~ \lambda \in D_{w} \} $ 
	is dense. 
	Take $ \psi \in \ker \, C $. 
	It suffices to show that $ \psi = 0 $ in 
	order to prove that $ C $ is injective. 
	But then $ 0 = C \psi (\lambda) =
	\langle \phi_{\lambda} , \psi \rangle $ 
	for all $ \lambda \in D_{w} $. 
	It follows that  
	$ \psi \in 
	\left(  \mathrm{span} 
	\{ \phi_{\lambda} ~|~ \lambda \in D_{w} \} \right) ^{\perp} $, 
	which is $ 0 $ by the hypothesis. 
	So we see that $ \psi = 0 $ as desired. 
\end{proof} 

The existence of a resolution of the 
identity is the key for the next result. 

\begin{theorem}
	Suppose that there exists a positive measure 
	$ \rho $ on $ B_{w} $ such that
	the resolution of the identity 
	\eqref{resolution-of-id} holds. 
	Suppose that $ \rho(\partial B_{w}) = 0 $ 
	in the case that the boundary 
	$ \partial B_{w} \subset B_{w} $. 
	Then the coherent state transform $ C $
	is a unitary transform from $ \mathcal{H} $ 
	onto its range $ \mathrm{Ran} \, C $ in 
	$ L^{2} (D_{w}, \rho ) $. 
	Consequently, $ \mathrm{Ran} \, C $ is a closed 
	subspace of 	$ L^{2} (D_{w}, \rho ) $. 
	     
	\vskip 0.1cm \noindent 
	{\bf Notation:} 
	$ C : \mathcal{H} \stackrel{\cong}{\longrightarrow} 
	 \mathcal{SB} \subset 
	 L^{2} (D_{w}, \rho ) $, 
	 where $ \mathcal{SB}:= \mathrm{Ran} \, C $ 
	 is called a {\em generalized Segal-Bargmann space.}
     (See \cite{barg}.)
     
\vskip 0.1cm
Moreover, $ \mathcal{SB} $ is a reproducing kernel 
Hilbert space of anti-holomorphic functions 
with reproducing kernel {\rm function} defined 
for $ \mu, \lambda \in D_{w} $ by 
\begin{equation}
\label{repro-kernel}
K(\mu, \lambda) := 
\langle 
\phi_{\mu} , \phi_{\lambda}
\rangle_{\mathcal{H}} = 
\sum_{n=0}^{\infty} \mu^{* n} \lambda^{n}
\, |q|^{n(n+1)} \,
 w_{n}^{-1}\! . 
\end{equation}
\end{theorem}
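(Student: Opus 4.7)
The plan is to establish the four assertions in sequence: that $ C $ is an isometry, that $ \mathcal{SB} $ is therefore closed in $ L^{2}(D_{w},\rho) $, that $ \mathcal{SB} $ consists of anti-holomorphic functions, and that the kernel \eqref{repro-kernel} has the reproducing property.

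First I would prove the isometry, which is essentially a restatement of the weak-form resolution of the identity \eqref{i-p-resolution}. For $ \psi_{1}, \psi_{2} \in \mathcal{H} $ one computes
\[
\langle C\psi_{1}, C\psi_{2}\rangle_{L^{2}(D_{w},\rho)} = \int_{D_{w}} d\rho(\lambda)\, \bigl(C\psi_{1}(\lambda)\bigr)^{\!*}\, C\psi_{2}(\lambda) = \int_{B_{w}} d\rho(\lambda)\, \langle \psi_{1}, \phi_{\lambda}\rangle \langle \phi_{\lambda}, \psi_{2}\rangle = \langle \psi_{1}, \psi_{2}\rangle_{\mathcal{H}},
\]
where the hypothesis $ \rho(\partial B_{w}) = 0 $ justifies the passage from $ D_{w} $ to $ B_{w} $, and the last equality is \eqref{i-p-resolution}. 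Injectivity of $ C $ is then automatic, and since $ \mathcal{H} $ is complete the image $ \mathcal{SB} = \mathrm{Ran}\, C $ inherits completeness and is therefore closed in $ L^{2}(D_{w},\rho) $. The anti-holomorphy of elements of $ \mathcal{SB} $ was already observed in the Remark following the definition of $ C $: each $ C\psi $ is an absolutely convergent power series in $ \lambda^{*} $ on $ D_{w} $, so $ \mathcal{SB} \subset \mathcal{A}(D_{w}) $.

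For the reproducing property I would verify the two defining requirements. Directly from the definition, $ K(\cdot, \lambda) = C\phi_{\lambda} $ as a function of the first argument, so $ K(\cdot, \lambda) \in \mathcal{SB} $ for every $ \lambda \in D_{w} $. Then, for any $ f = C\psi \in \mathcal{SB} $, the isometry of $ C $ immediately yields
\[
\langle K(\cdot, \lambda), f\rangle_{\mathcal{SB}} = \langle C\phi_{\lambda}, C\psi\rangle_{\mathcal{SB}} = \langle \phi_{\lambda}, \psi\rangle_{\mathcal{H}} = C\psi(\lambda) = f(\lambda),
\]
which is the reproducing property. The closed-form series \eqref{repro-kernel} then follows by substituting the expansion \eqref{coherent-state} of $ \phi_{\mu} $ and $ \phi_{\lambda} $ into $ \langle \phi_{\mu}, \phi_{\lambda}\rangle_{\mathcal{H}} $ and using orthonormality of the basis $ \{\phi_{n}\} $.

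The one genuinely delicate point — the interchange of integral and infinite sum in the isometry identity for general $ \psi_{1}, \psi_{2} \in \mathcal{H} $, as opposed to vectors in the polynomial subspace $ \mathcal{P} $ where the earlier discussion showed this was unproblematic — has been absorbed into the hypothesis, since \eqref{resolution-of-id} and its equivalent weak form \eqref{i-p-resolution} are assumed to hold on all of $ \mathcal{H} \times \mathcal{H} $. Beyond this the entire argument is routine bookkeeping.
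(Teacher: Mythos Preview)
Your proof is correct and follows essentially the same route as the paper: the isometry comes from the weak resolution of the identity \eqref{i-p-resolution}, closedness follows, and the reproducing kernel is identified via $K(\cdot,\lambda)=C\phi_{\lambda}$. The only cosmetic difference is that you obtain the reproducing identity $\langle K(\cdot,\lambda),f\rangle_{\mathcal{SB}}=f(\lambda)$ directly from the already-established isometry of $C$, whereas the paper writes out the integral $\int_{D_{w}}d\rho(\lambda)\,K(\mu,\lambda)f(\lambda)$ and invokes \eqref{i-p-resolution} a second time; the two computations are equivalent.
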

\noindent 
\textbf{Remark:} 
The existence of a reproducing kernel Hilbert space
isomorphic to $ \mathcal{H} $ is a nice property, 
but it depends on the existence of a 
resolution of the identity, 
which we will not have in many 
interesting cases. 
And, of course, there could be more than 
one resolution of the identity. 
\begin{proof}
Let $ \psi_{1}, \psi_{2} \in \mathcal{H}$ be given.
Then we compute
\begin{align*}
\langle C \psi_{1}, C \psi_{2} \rangle_{L^{2}(D_{w}, \rho)} &=   
\int_{D_{w}} d \rho (\lambda) 
(C \psi_{1} (\lambda) )^{*} C \psi_{2} (\lambda)
\\
&=   
\int_{B_{w}} d \rho (\lambda) 
\langle 
    \psi_{1} , \phi_{\lambda}  
\rangle_{\mathcal{H}}
\langle 
\phi_{\lambda}  , \psi_{2} 
\rangle_{\mathcal{H}} 
\\
&=   
\langle 
\psi_{1} ,  \psi_{2} 
\rangle_{\mathcal{H}}, 
\end{align*}
where we used the resolution of the identity 
\eqref{i-p-resolution} 
in the last equality. 
This shows exactly that $ C $ is a unitary 
transform onto its range. 
In particular, $ C $ is injective. 

Next, let's show that the function in
\eqref{repro-kernel} has the reproducing property. 
So we take an arbitrary element 
$ f \in  \mathcal{SB}$. 
Note that $ f = C \psi  $ for a unique element 
$ \psi \in \mathcal{H} $. 
Then we calculate 
\begin{align*}
\int_{D_{w}} \!\!\! d  \rho (\lambda) \, 
K(\mu, \lambda) f (\lambda) &=
\int_{B_{w}} \!\!\! d \rho (\lambda) \, 
\langle 
\phi_{\mu} , \phi_{\lambda}
\rangle 
C\psi (\lambda)
\\
&=
\int_{B_{w}} \!\!\! d \rho (\lambda) \, 
\langle 
\phi_{\mu} , \phi_{\lambda}
\rangle 
\langle \phi_{\lambda}, \psi \rangle 
\\
&= 
\langle 
\phi_{\mu} , \psi \rangle 
\\
&= 
C \psi (\mu)
\\
&= 
f (\mu). 
\end{align*}
We used the hypothesis $ \rho(\partial B_{w}) = 0 $ 
in the first equality. 
This shows the reproducing property. 

But we also have to show that 
$ K(\cdot, \lambda) \in \mathcal{SB}$ for 
every $ \lambda \in D_{w} $, this being the second 
defining property of a reproducing kernel function. 
But for all $ \mu \in D_{w} $ we have that 
\begin{equation}
\label{kernel-equals-etc}
K (\mu, \lambda) = 
\langle \phi_{\mu} , \phi_{\lambda} \rangle =
C \phi_{\lambda} (\mu) 
\end{equation}
so that $ K (\cdot, \lambda) = C \phi_{\lambda} 
\in \mathrm{Ran} \, C = \mathcal{SB} $ as desired. 

So the function 
$ K: D_{w} \times D_{w} \to \mathbb{C} $ 
in \eqref{repro-kernel} satisfies 
the two defining properties of the 
(unique, it it exists) 
reproducing kernel function for $ \mathcal{SB} $. 
\end{proof}

There is an immediate corollary to this proof. 
\begin{corollary}
	Assume the hypothesis of the previous theorem.
	Then the coherent state transform of a generic 
	coherent state equals the 
	reproducing kernel of the generalized 
	Segal-Bargmann space. 
\end{corollary}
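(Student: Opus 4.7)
The plan is to observe that the corollary is essentially a restatement of equation \eqref{kernel-equals-etc}, which was established as an intermediate step in the preceding theorem's proof. So the work is largely already done; the corollary just repackages that identity as a standalone statement about the transform $C$ applied to the family of coherent states.

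Concretely, I would fix an arbitrary $\lambda \in D_{w}$ (this is the "generic" coherent state $\phi_\lambda$), and then compute $C\phi_\lambda(\mu)$ for $\mu \in D_{w}$ directly from the definition of the coherent state transform. By definition,
\[
C\phi_\lambda(\mu) = \langle \phi_\mu, \phi_\lambda \rangle_{\mathcal{H}},
\]
and the right-hand side is precisely $K(\mu,\lambda)$ by the definition \eqref{repro-kernel} of the reproducing kernel. Equivalently, one can expand both sides using \eqref{coherent-state} and verify the power series identity
\[
\langle \phi_\mu, \phi_\lambda \rangle_{\mathcal{H}} = \sum_{n=0}^{\infty} \mu^{*n} \lambda^{n}\, |q|^{n(n+1)}\, w_{n}^{-1},
\]
which is absolutely convergent for $\mu,\lambda \in D_{w}$ by the same estimate that guarantees $\phi_\lambda \in \mathcal{H}$.

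Since the function $\mu \mapsto K(\mu,\lambda)$ was already shown in the previous proof to lie in $\mathcal{SB}$ (this being the second property of a reproducing kernel), both sides of the desired equality are bona fide elements of $\mathcal{SB}$, and the identification $C\phi_\lambda = K(\cdot,\lambda)$ holds pointwise on $D_{w}$. There is no substantive obstacle here; the only mild point worth flagging is that the hypothesis $\rho(\partial B_{w})=0$ from the theorem is inherited, so that integration over $B_{w}$ and over $D_{w}$ agree, ensuring that $C\phi_\lambda$ is well-defined as the $L^{2}(D_{w},\rho)$ class producing the stated anti-holomorphic representative.
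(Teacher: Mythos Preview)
Your proposal is correct and follows essentially the same approach as the paper: both simply read equation \eqref{kernel-equals-etc} backwards to obtain $C\phi_\lambda(\mu) = K(\mu,\lambda)$. The paper's proof is a single sentence to this effect, while you elaborate with the series expansion and the remark about $\rho(\partial B_w)=0$, but the core argument is identical.
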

\begin{proof}
We just read \eqref{kernel-equals-etc} backwards, 
namely
$
C \phi_{\lambda} (\mu)  = K (\mu, \lambda)
$.
This equality is what the corollary says 
clumsily in words.
\end{proof}
\begin{remark}
	The property in this corollary already appears in 
	Bargmann's 1961 seminal paper \cite{barg}. 
	It is one of the characteristic properties of
	coherent states, though it is not always 
	mentioned. 
\end{remark}

We can re-write one result of this theorem as 
$  \mathcal{SB} \subset 
L^{2}_{AH} (D_{w}, \rho ) $, the subspace 
of $ L^{2} (D_{w}, \rho ) $ of 
anti-holomorphic functions, namely, 
$$
L^{2}_{AH} (D_{w}, \rho ):= 
L^{2} (D_{w}, \rho ) \cap \mathcal{A} (D_{w}). 
$$
Another approach, more in line with \cite{barg},  
would be to define the Segal-Bargmann 
space as $ L^{2}_{AH} (D_{w}, \rho ) $, which 
as far as we have shown at this point could be 
strictly larger than $ \mathcal{SB} $. 
We would like to show that 
these spaces are actually equal. 
The following is a partial result in that direction. 

\begin{theorem}
\label{SB-equals-anti-L-2-space}
Suppose that the resolution of the identity 
\eqref{resolution-of-id} holds for a measure 
that is radial and absolutely continuous with 
respect to Lebesgue measure. 
Then $ \mathcal{SB} = 
L^{2}_{AH} (D_{w}, \rho ) $. 
\end{theorem}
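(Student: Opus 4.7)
The inclusion $\mathcal{SB} \subset L^2_{AH}(D_w,\rho)$ is already established, so the task is to prove the reverse inclusion. The plan is to identify an explicit orthonormal basis of $L^2_{AH}(D_w,\rho)$ which coincides, up to unimodular scalars, with the image $\{C\phi_n\}_{n\ge 0}$ of the standard basis under the coherent state transform. Since $C : \mathcal{H} \to \mathcal{SB}$ is unitary by the previous theorem, $\{C\phi_n\}$ is an orthonormal basis of $\mathcal{SB}$, and a direct calculation gives
\[
C\phi_n(\lambda) = \langle \phi_\lambda,\phi_n\rangle_{\mathcal{H}} = \lambda^{*n}\, q^{*n(n+1)/2}\, w_n^{-1/2},
\]
so these are constant (nonzero) multiples of the anti-holomorphic monomials $\lambda^{*n}$.

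The core step is a Parseval argument that exploits the radial nature of $\rho$. Take an arbitrary $f \in L^2_{AH}(D_w,\rho)$ and write its anti-holomorphic Taylor series $f(\lambda)=\sum_{n=0}^{\infty} a_n \lambda^{*n}$, convergent uniformly on compact subsets of $D_w$. In polar coordinates $\lambda=re^{i\alpha}$ this becomes the Fourier series $f(re^{i\alpha})=\sum_n a_n r^n e^{-in\alpha}$, and the Parseval identity on $[0,2\pi]$ gives $\int_0^{2\pi}|f(re^{i\alpha})|^2 d\alpha = 2\pi \sum_n |a_n|^2 r^{2n}$. Integrating in $r$ against the weight $\rho(r)r\,dr$, applying Tonelli (all integrands are nonnegative) and invoking the moment identity \eqref{first-rho-condition} yields
\[
\|f\|^2_{L^2(\rho)} = 2\pi \sum_{n=0}^{\infty} |a_n|^2 \int_0^{R_w} r^{2n+1}\rho(r)\,dr = \sum_{n=0}^{\infty} |a_n|^2 \, |q|^{-n(n+1)}\, w_n.
\]
This computation simultaneously shows that the monomials $\lambda^{*n}$ are mutually orthogonal in $L^2(D_w,\rho)$ with $\|\lambda^{*n}\|^2 = |q|^{-n(n+1)}w_n$, and that $f\in L^2_{AH}$ forces the series on the right to be finite.

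To conclude, set $c_n := a_n\, q^{n(n+1)/2}\, w_n^{1/2}$, so that $|c_n|^2 = |a_n|^2 |q|^{-n(n+1)} w_n$ and therefore $\sum_n |c_n|^2 = \|f\|^2_{L^2(\rho)} < \infty$. Define $\psi := \sum_n c_n \phi_n \in \mathcal{H}$. By continuity of the unitary $C$, the image $C\psi$ is the $L^2$-limit of the partial sums $\sum_{n\le N} c_n\, C\phi_n$, whose pointwise values are precisely the Taylor partial sums $\sum_{n\le N} a_n \lambda^{*n}$. These converge pointwise on $D_w$ to $f(\lambda)$. Passing to an a.e.-convergent subsequence, the two limits must agree, giving $C\psi = f$ in $L^2(D_w,\rho)$. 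Hence $f \in \mathcal{SB}$, completing the proof.

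The only technical points requiring care are the Tonelli interchange in the Parseval step (routine, thanks to non-negativity) and the reconciliation of pointwise convergence of the Taylor partial sums with the $L^2$-convergence of $\sum c_n C\phi_n$; this is the standard "$L^2$-limit and pointwise limit agree almost everywhere" argument, and is the mildest hurdle of the proof.
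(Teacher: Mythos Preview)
Your proof is correct and follows essentially the same strategy as the paper's: both arguments use the moment identities \eqref{first-rho-condition} to establish the Parseval formula $\|f\|^2_{L^2(\rho)} = \sum_n |a_n|^2 |q|^{-n(n+1)} w_n$ for an arbitrary $f\in L^2_{AH}(D_w,\rho)$, and then conclude that $f$ lies in the range of $C$. The technical packaging differs slightly---you invoke Parseval on circles together with Tonelli, whereas the paper first restricts to closed balls $B_s$ (where uniform convergence of the Taylor series justifies the interchange) and then lets $s\to R_w$ via monotone convergence; and you finish by directly exhibiting a preimage $\psi\in\mathcal{H}$ with $C\psi=f$, while the paper instead shows that $C$ carries the standard orthonormal basis of $\mathcal{H}$ onto an orthonormal basis of $L^2_{AH}(D_w,\rho)$---but these are equivalent routes to the same conclusion.

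One small slip to fix: with $c_n := a_n\, q^{n(n+1)/2} w_n^{1/2}$ you get $|c_n|^2 = |a_n|^2 |q|^{n(n+1)} w_n$, which has the wrong sign in the exponent. What you need is $c_n := a_n\, (q^{*})^{-n(n+1)/2} w_n^{1/2}$, so that $c_n \cdot q^{*n(n+1)/2} w_n^{-1/2} = a_n$ and $|c_n|^2 = |a_n|^2 |q|^{-n(n+1)} w_n$ as required.
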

\begin{proof}
Since $ \mathcal{SB}= \mathrm{Ran} \, C $, it suffices 
to show that 
$ \mathrm{Ran} \, C = L^{2}_{AH} (D_{w}, \rho ) $. 
We have shown already one inclusion. 
So it remains to show 
that any $ f \in L^{2}_{AH} (D_{w}, \rho ) $ lies 
in $ \mathrm{Ran} \, C $. 
Considering such a function $ f $, we can express it 
as 
\begin{equation}
\label{anti-f-expanded}
 f (\lambda) = \sum_{n=0}^{\infty} a_{n} \lambda^{*n} 
 \quad \mathrm{for~all~}  \lambda \in D_{w}
\end{equation}
for certain 
coefficients $ a_{n} \in \mathbb{C} $. 
We let $ 0 < s < R_{w} $ and put 
$ B_{s}  := 
\{ \lambda \in \mathbb{C}~|~ |\lambda| \le s \} $,
the closed ball of radius $ s $. 
The condition that $ f \in L^{2} (D_{w}, \rho ) $ 
implies that the first integral in the 
following calculation is finite: 
\begin{align*}
\int_{B_{s}} d \rho (\lambda) \, |f(\lambda)|^{2} &= 
\int_{B_{s}} d \rho (\lambda) 
\sum_{j,k=0}^{\infty} 
a_{j}^{*} \lambda^{j} a_{k} \lambda^{*k}
\\
&= 
\sum_{j,k=0}^{\infty} a_{j}^{*} a_{k}
\int_{B_{s}} d \rho (\lambda) \,
\lambda^{j} \lambda^{*k}
\\
&= 
\sum_{j=0}^{\infty} |a_{j}|^{2} \,
( 2 \pi )\int_{0}^{s} dr \, \rho(r) \, r^{2 j + 1}. 
\end{align*}
The interchange of the integral and the infinite sum 
in the second equality is justified 
since the double series 
converges uniformly on $ D_{s} $ to its limit 
by standard properties of power series  
and by the hypothesis that $ \rho $ is absolutely 
continuous with respect to Lebesgue measure. 
Taking the limit as $ s \to R_{w} $ we get 
\begin{align*}
\int_{D_{w}} d \rho (\lambda) \, |f(\lambda)|^{2} 
&= 
\lim_{s \to R_{w}} 
\int_{B_{s}} d \rho (\lambda) \, |f(\lambda)|^{2} 
\\
&= 
\lim_{s \to R_{w}}  
\sum_{j=0}^{\infty} |a_{j}|^{2} \,
( 2 \pi ) \int_{0}^{s} dr \, \rho(r) \, r^{2 j + 1} 
\\
&=
\sum_{j=0}^{\infty} |a_{j}|^{2} \,
( 2 \pi ) 
\lim_{s \to R_{w}}
\int_{0}^{s} dr \, \rho(r) \, r^{2 j + 1} 
\\
&=
\sum_{j=0}^{\infty} |a_{j}|^{2} \,
( 2 \pi ) 
\int_{0}^{R_{w}} dr \, \rho(r) \, r^{2 j + 1} 
\\
&=
\sum_{j=0}^{\infty} |a_{j}|^{2} |q|^{-j(j+1)} w_{j}. 
\end{align*}
Here we used the Lebesgue
monotone convergence theorem
in the first, third and fourth equalities. 
The last equality follows from 
\eqref{first-rho-condition}. 
We have shown that 
\begin{equation}
\label{norm-squared-identity}
|| f ||^{2} = \sum_{j=0}^{\infty} 
|a_{j}|^{2} |q|^{-j(j+1)} w_{j} < \infty
\end{equation}
for all $ f \in L^{2}_{AH} (D_{w}, \rho ) $, 
where $ f $ is given by the series in 
\eqref{anti-f-expanded}. 

We now consider the anti-holomorphic 
monomials $ \lambda^{*j} $ 
for $ j \ge 0 $.  
Then one proves that 
$ || \lambda^{*j} ||^{2} = 
|q|^{-j(j+1)} w_{j} < \infty$ 
by evaluating the integral, and 
so we have that 
$ \lambda^{*j} \in  L^{2}_{AH} (D_{w}, \rho )$. 
Consequently, all the anti-holomorphic polynomials are 
in $ L^{2}_{AH} (D_{w}, \rho ) $. 
Another easy calculation shows that
$ \langle \lambda^{*j} , \lambda^{*k} \rangle = 0 $ 
for all $ j \ne k $. 
This combines to show that 
$$ 
\mathcal{B_{AH}}:= 
\{ q^{*j(j+1)/2} w_{j}^{-1/2} \lambda^{*j} \}
$$ 
is an orthonormal {\em set} in the Hilbert space 
$ L^{2}_{AH} (D_{w}, \rho ) $. 

Next we claim that any 
$ f \in L^{2}_{AH} (D_{w}, \rho ) $ 
as in \eqref{anti-f-expanded} is the 
limit in the $ L^{2} $-norm topology of the sequence 
of its partial sums 
$ f_{N}(\lambda) := 
\sum_{j=0}^{N} a_{j} \lambda^{*j} $, which 
are anti-holomorphic polynomials. 
This is so since 
$ f - f_{N} \in L^{2}_{AH} (D_{w}, \rho ) $ 
and therefore by \eqref{norm-squared-identity} applied 
now to $ f - f_{N} $ we have that 
$$
|| f - f_{N} ||^{2} = 
\sum_{j=N+1}^{\infty} |a_{j}|^{2}  |q|^{-j(j+1)} w_{j},
$$ 
which goes to $ 0 $ as $ N \to \infty $ 
by \eqref{norm-squared-identity} applied to $ f $. 
The whole point of the proof so far is that 
we can conclude from this that $ \mathcal{B}_{AH} $ 
is an orthonormal {\em basis} of the Hilbert space 
$ L^{2}_{AH} (D_{w}, \rho ) $.

We next calculate the coherent state transforms 
of the standard basis elements 
$\phi_{j} \in \mathcal{H} $. 
For every integer $ j \ge 0 $ 
and $ \lambda \in D_{w} $ this gives us 
$$
C \phi_{j} (\lambda)
= 
\langle \phi_{\lambda} , \phi_{j} \rangle 
=  
\Big\langle 
\sum_{n=0}^{\infty} \lambda^{n} q^{n(n+1)/2} 
 w_{n}^{-1/2} \phi_{n}, 
\phi_{j} 
\Big\rangle 
=
q^{*j(j+1)/2} w_{j}^{-1/2} \lambda^{*j}. 
$$
Therefore, $ C $ maps the orthonormal 
basis $ \{ \phi_{j} \,|\, j \ge 0\} $ 
of $ \mathcal{H} $ 
onto the  orthonormal basis $ \mathcal{B}_{AH} $
of $ L^{2}_{AH} (D_{w}, \rho ) $, 
which proves that 
$ \mathrm{Ran} \, C = L^{2}_{AH} (D_{w}, \rho ) $ 
as desired. 
\end{proof}

\begin{remark}
In this argument we proved in \eqref{norm-squared-identity} that for any 
anti-holomorphic function $ f \in L^{2} (\rho)$ 
as in \eqref{anti-f-expanded}, 
we have 
$ \sum_{j} |a_{j}|^{2} |q|^{j(j+1)} w_{j} $ 
is finite and is 
equal to $ ||f ||^{2}_{L^{2} (\rho)} $. 
We did not prove the converse, which we now do. 
\end{remark}

\begin{theorem}
Assume the same hypothesis as in the previous 
theorem. 
Suppose an anti-holomorphic function 
$ f $ is given as in 
\eqref{anti-f-expanded} and that 
$ \sum_{j} |a_{j}|^{2} |q|^{j(j+1)} w_{j} < \infty$. 
Then $ f \in L^{2}_{AH} (D_{w}, \rho ) $ and 
$ || f ||^{2}_{L^{2} (\rho)} = 
\sum_{j} |a_{j}|^{2}  |q|^{j(j+1)} w_{j}  $. 
\end{theorem}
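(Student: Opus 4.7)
The plan is to establish the converse to identity \eqref{norm-squared-identity} by a standard Cauchy-sequence/completeness argument that mirrors the proof of Theorem \ref{SB-equals-anti-L-2-space}. The idea is to approximate $f$ by the partial sums of its defining series, show these partial sums are Cauchy in $L^{2}(D_{w}, \rho)$, pass to a limit by completeness, and then verify that this limit coincides with $f$ itself.

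Concretely, I would set $f_{N}(\lambda) := \sum_{j=0}^{N} a_{j} \lambda^{*j}$. Each $f_{N}$ is an anti-holomorphic polynomial, hence belongs to $L^{2}_{AH}(D_{w}, \rho)$ by the observation recorded in the preceding proof. The orthogonality of the monomials $\{\lambda^{*j}\}$ in $L^{2}_{AH}(D_{w}, \rho)$, together with the norm computation for those monomials already carried out there, yields
\[
|| f_{M} - f_{N} ||^{2}_{L^{2}(\rho)} = \sum_{j=N+1}^{M} |a_{j}|^{2} \, |q|^{j(j+1)} \, w_{j}
\]
for $M > N$, which is a tail of the convergent series assumed in the hypothesis. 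Hence $\{f_{N}\}$ is Cauchy in $L^{2}(D_{w}, \rho)$. By completeness of $L^{2}(D_{w}, \rho)$ and the closedness of the anti-holomorphic subspace (which follows because the rescaled monomials form an orthonormal basis of $L^{2}_{AH}(D_{w}, \rho)$, as established in the preceding proof), there exists a limit $g \in L^{2}_{AH}(D_{w}, \rho)$. Continuity of the $L^{2}$-norm then gives
\[
|| g ||^{2}_{L^{2}(\rho)} = \lim_{N \to \infty} || f_{N} ||^{2}_{L^{2}(\rho)} = \sum_{j=0}^{\infty} |a_{j}|^{2} \, |q|^{j(j+1)} \, w_{j}.
\]

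It remains to identify the $L^{2}$-equivalence class $g$ with the pointwise-defined function $f$ from \eqref{anti-f-expanded}. For this I would use uniform convergence of $\sum_{j} a_{j} \lambda^{*j}$ on each closed sub-disc $\{|\lambda| \leq s\}$ with $s < R_{w}$: the summability hypothesis forces the decay $|a_{j}| = o(|q|^{-j(j+1)/2} w_{j}^{-1/2})$, which combined with the radius-of-convergence formula for $R_{w}$ given in the excerpt lets one apply the Weierstrass $M$-test, yielding uniform convergence of $f_{N} \to f$ on each such sub-disc. Since $\rho$ is absolutely continuous with respect to Lebesgue measure, its restriction to any such sub-disc is a finite measure, so uniform convergence upgrades to $L^{2}$-convergence there. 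Comparing with $f_{N} \to g$ in $L^{2}(D_{w}, \rho)$ and restricting to each $\{|\lambda| \leq s\}$, we obtain $f = g$ almost everywhere, exhausting $D_{w}$ as $s \uparrow R_{w}$.

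The main obstacle is precisely this last identification step: bridging the pointwise sense in which $f$ is defined (as a convergent power series on $D_{w}$) with the purely $L^{2}$-theoretic object $g$. Once uniform convergence on compacta is established via the $M$-test and the asymptotics encoded in $R_{w}$, everything else is bookkeeping with the orthonormal basis $\{ q^{*j(j+1)/2} w_{j}^{-1/2} \lambda^{*j} \}$ already in hand from the previous theorem.
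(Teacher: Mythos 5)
Your argument reaches the correct conclusion but by a genuinely different route from the paper's. The paper applies Fatou's lemma directly to the pointwise limits $|f|^{2}=\lim_{N}|f_{N}|^{2}$ and $|f-f_{N}|^{2}=\lim_{M}|f_{M}-f_{N}|^{2}$: the first application gives $f\in L^{2}_{AH}(D_{w},\rho)$ with $\|f\|^{2}\le\sum_{j}|a_{j}|^{2}|q|^{j(j+1)}w_{j}$, the second gives $\|f-f_{N}\|\to 0$, and continuity of the norm then yields the exact equality. Because Fatou is applied to $f$ itself, the paper never has to produce an abstract $L^{2}$-limit $g$ and then identify it with $f$; that identification is precisely the extra work your completeness-based argument incurs. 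Your version of that extra work is sound in outline: the $j=0$ case of \eqref{first-rho-condition} gives $\rho(B_{w})=w_{0}<\infty$, so uniform convergence of $f_{N}\to f$ on each closed sub-disc does upgrade to $L^{2}$-convergence there, and comparison with $f_{N}\to g$ forces $f=g$ $\rho$-a.e.

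One sub-step is wrong as stated, though easily repaired. You derive uniform convergence on $\{|\lambda|\le s\}$ from the decay $|a_{j}|=o\bigl(|q|^{-j(j+1)/2}w_{j}^{-1/2}\bigr)$ via the $M$-test. But the majorant series $\sum_{j}|q|^{-j(j+1)/2}w_{j}^{-1/2}s^{j}$ has radius of convergence $\bigl(\liminf_{j}|q|^{j+1}w_{j}^{1/j}\bigr)^{1/2}$, whereas $R_{w}=\bigl(\liminf_{j}|q|^{-(j+1)}w_{j}^{1/j}\bigr)^{1/2}$; these differ by the factor $|q|^{2(j+1)}$ inside the liminf, so for $|q|<1$ the majorant can diverge well inside $D_{w}$ (for instance $w_{j}\equiv 1$, $|q|=1/2$ gives $R_{w}=\infty$ but a majorant with radius of convergence $0$). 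The fix is immediate and makes the decay estimate unnecessary: the hypothesis \eqref{anti-f-expanded} already asserts that $\sum_{j}a_{j}\lambda^{*j}$ converges at every point of $D_{w}$, so its radius of convergence is at least $R_{w}$ and the convergence is uniform on compact subsets by Abel's lemma. With that substitution your proof is complete, at the cost of being somewhat longer than the paper's two applications of Fatou.
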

\begin{proof}
We consider the partial sums
$f_{N}(\lambda) = \sum_{j=0}^{N} a_{j} \lambda^{*j}$, 
which by the previous theorem are in 
$ L^{2}_{AH} (D_{w}, \rho ) $ 
and satisfy 
$|| f_{N} ||^{2} = 
\sum_{j=0}^{N} |a_{j}|^{2}  |q|^{j(j+1)} w_{j}$. 
Moreover, by \eqref{anti-f-expanded} we have
$$
\lim_{N \to \infty} f_{N} (\lambda) = f (\lambda) 
\quad \mathrm{for~all~}\lambda \in D_{w}. 
$$
So, by the Fatou lemma we estimate 
\begin{align*}
\int_{D_{w}} d \rho (\lambda) \, | f(\lambda)|^{2} &= 
\int_{D_{w}} d \rho (\lambda) \, \lim_{N \to \infty} 
|  f_{N}(\lambda) |^{2}
\\
&\le  
\liminf_{N \to \infty} 
\int_{D_{w}} d \rho (\lambda) \,  
|  f_{N}(\lambda) |^{2}
\\
&=
\liminf_{N \to \infty} || f_{N} ||^{2}
\\
&=  
\lim_{N \to \infty} \sum_{j=0}^{N} 
|a_{j}|^{2}  |q|^{j(j+1)} w_{j} 
\\
&= 
\sum_{j=0}^{\infty} |a_{j}|^{2} 
 |q|^{j(j+1)} w_{j} < \infty. 
\end{align*}
This proves that $ f \in L^{2}_{AH} (D_{w}, \rho ) $ 
but only gives an estimate on its $ L^{2} $-norm. 
Next a similar argument using the Fatou lemma shows 
for every integer $ N \ge 0 $ that 
$$
|| f - f_{N} ||^{2} \le \sum_{j=N+1}^{\infty} 
|a_{j}|^{2} |q|^{j(j+1)} w_{j}, 
$$
which goes to $ 0 $ as $ N \to \infty $. 
This says that $ f_{N} \to f $ in the topology of the 
$ L^{2} $-norm. 
This in turn implies by the continuity of  the 
$ L^{2} $-norm that  
$$
|| f ||^{2} = \lim_{N \to \infty} || f_{N} ||^{2}
\\
= \lim_{N \to \infty} \sum_{j=0}^{N} |a_{j}|^{2} w_{j} 
= \sum_{j=0}^{\infty} |a_{j}|^{2} w_{j}. 
$$
And that finishes the proof. 
\end{proof}

As far as this analysis goes 
it still remains a logical 
possibility that the inclusion 
$  \mathcal{SB} \subset 
L^{2}_{AH} (D_{w}, \rho ) $ is proper for
other measures $ \rho $. 
What is happening for such measures is an 
open problem.

\section{Another Toeplitz Quantization}

We continue to assume  
\eqref{resolution-of-id} holds for 
some measure  
$ \rho $ on $  B_{w} $ in this section. 
We proved in the last section that 
the Segal-Bargmann space $ \mathcal{SB} $ 
is a closed subspace of $ L^{2} (D_{w}, \rho) $
and that it is a reproducing kernel Hilbert space. 
This gives the standard set-up for defining 
Toeplitz operators whose symbols are functions. 
First one uses the kernel function 
\eqref{repro-kernel} as the kernel of an 
integral operator that defines an 
orthogonal projection 
$ P_{K}: L^{2} (D_{w}, \rho) \to L^{2} (D_{w}, \rho) $ 
by 
$$
(P_{K} f )(\mu) := 
\int_{D_{w}} d \rho(\lambda) K(\mu,\lambda) f (\lambda) 
\quad \mathrm{for~}f \in L^{2} (D_{w}, \rho)  
\mathrm{~and~}\mu \in D_{w}. 
$$
This integral converges absolutely for 
$ \mu \in D_{w} $ by the 
Cauchy-Schwarz inequality together with the fact that
$ K(\mu, \cdot) \in L^{2} (D_{w}, \rho) $. 
By a standard argument $ (P_{K} f )(\mu) $
is anti-holomorphic in $ \mu \in D_{w} $. 
Moreover, by the resolution of the identity 
\eqref{resolution-of-id}, $ P_{K} $ acts as the 
identity on $ \mathcal{SB} $. 
However, we only have shown that 
$ P_{K} f \in L_{AH}^{2} (D_{w}, \rho) $, 
which as we noted earlier might be strictly 
larger than $ \mathcal{SB} $ 
for some measures $ \rho $. 
It even seems possible to have
$$
\mathcal{SB} \subsetneq \mathrm{Ran} \, P_{K} 
\subsetneq L_{AH}^{2} (D_{w}, \rho). 
$$
To avoid such details  
for the rest of this section we assume
that $\mathcal{SB} = L_{AH}^{2} (D_{w}, \rho) $, 
which we know holds in many cases according 
to Theorem~\ref{SB-equals-anti-L-2-space}. 
Also $ \mathcal{SB} $ has a lot of nice 
structure, such as an explicit reproducing 
kernel function and a standard orthonormal basis,
making it a more preferable domain for 
Toeplitz operators. 

\begin{definition}
For $ f \in L^{\infty} (D_{w}, \rho) $ 
we define the  
{\em (secondary) Toeplitz operator with symbol $ f $} 
by 
$$
    S_{f} \, \phi := P_{K} (f \phi) 
    \quad \mathrm{for~all~} \phi \in \mathcal{SB}. 
$$
\end{definition}
\begin{remark}
The notation $ S_{f} $ distinguishes this from 
the Toeplitz operators $ T_{g} $ which we introduced in 
Section~\ref{setting-section}.  
Clearly, $ S_{f} : \mathcal{SB} \to \mathcal{SB} $ 
is linear. 
Moreover, we call the mapping 
$$ 
S : L^{\infty} (D_{w}, \rho) \to \
\mathcal{L} (\mathcal{SB}) := 
\{ T: \mathcal{SB} \to \mathcal{SB} ~|~ 
T \mathrm{~is~linear}\}
$$ 
given by $ f \mapsto S_{f} $ 
the {\em (secondary) Toeplitz quantization} of 
the commutative algebra
$ L^{\infty} (D_{w}, \rho) $. 
Also $ S $ is linear. 
Recall that $ D_{w} $ is the interior of the 
phase space associated to the Manin plane. 
So this is another quantization scheme associated 
with the Manin plane. 
Many standard properties hold for this theory. 
But we will leave the development of them 
for future research. 
\end{remark}

\section{Coherent State Quantization}
\label{section-cs-quantization}

A general reference for 
coherent state quantization is Part~II of
the text \cite{gazeau} as well as
the papers  \cite{cotfas1} and \cite{cotfas2}. 
For an abstract approach see \cite{ali-englis}. 
To start off this section we assume  
\eqref{resolution-of-id} holds for 
some measure  
$ \rho $ on $  B_{w} $. 
This is the key property in order to be able
to define the coherent state quantization. 

\begin{definition}
Suppose $ f : B_{w} \to \mathbb{C} $ is a 
measurable function. 
We then define the 
{\em coherent state quantization of $ f $} to be the linear operator
\begin{equation}
\label{formal-Q-cs-definition}
Q_{cs} f := \int_{B_{w}} 
d \rho(\lambda) \,  f(\lambda) \, 
| \phi_{\lambda} \rangle \langle \phi_{\lambda} |, 
\end{equation}
where 
$ | \phi_{\lambda} \rangle \langle \phi_{\lambda} | $ 
is Dirac notation for the rank one 
projection operator given by
$ \psi \mapsto \langle \phi_{\lambda}, \psi \rangle  \phi_{\lambda} $ 
for all $ \psi \in \mathcal{H} $. 
This is also called the 
{\em frame quantization of $ f $}. 
The integral in 
\eqref{formal-Q-cs-definition}
is understood to mean the unique 
linear operator 
$ Q_{cs} f : \mathcal{H} \to \mathcal{H}$ 
(if it exists) which satisfies 
\begin{equation}
\label{define-Q-cs}
\langle \phi, (Q_{cs} f) \psi \rangle = 
\int_{B_{w}} d  \rho(\lambda) \,  f(\lambda) \, 
\langle \phi,  \phi_{\lambda} \rangle 
\langle \phi_{\lambda} , \psi \rangle 
\end{equation}
for all $ \phi, \psi \in \mathcal{H} $, where 
the right hand side of \eqref{define-Q-cs} 
is the Lebesgue integral 
of a complex valued, integrable function. 
	
Also, define 
$ \mathcal{L}_{cs}^{1} (B_{w}, \rho) $ to be 
the set of all those measurable functions
$ f : B_{w} \to \mathbb{C} $ for which the integral
in \eqref{define-Q-cs} exists for all 
$ \phi, \psi \in \mathcal{H} $, 
that is to say, 
its integrand is absolutely integrable. 
\end{definition}

\begin{remark}
\label{remark-section-8}
	The function 
	$ \langle \phi_{\lambda} , \psi \rangle $
	(resp., 
	$ \langle \phi,  \phi_{\lambda} \rangle $) 
	of $ \lambda \in D_{w} $ is anti-holomorphic 
	(resp., holomorphic) 
	and therefore is a measurable function. 
	Thus, the integrand in 
	\eqref{define-Q-cs} is a measurable function 
	of $ \lambda \in D_{w} $. 
	Also, it is clear that 
	$ \mathcal{L}_{cs}^{1} (B_{w}, \rho) $ 
	is a complex vector space with respect to the 
	standard point-wise definitions of sum and 
	scalar product. 
    Another immediate property is 
    $ Q_{cs} (f^{*}) = ( Q_{cs} (f) )^{*} $, 
    the adjoint operator of $ Q_{cs} (f) $. 
    And the resolution of the identity 
    \eqref{resolution-of-id} tells us that
    $ Q_{cs} 1 = I $, where $ 1 $ denotes the 
    constant function. 
\end{remark}

At this point let us recall some standard notations. 
We let $ \mathcal{B} (\mathcal{H})$ denote the 
Banach space of all linear, bounded maps 
$ A: \mathcal{H} \to \mathcal{H} $ where the 
norm $ || A || $ of such an $ A $ is its operator norm 
$|| A || := \sup \{ || A \psi || ~|~ 
\psi \in \mathcal{H}, \, 
||\psi|| \le 1\}$. 

\begin{prop}
	Suppose $ f : B_{w} \to \mathbb{C} $ 
	is integrable with respect to the 
	measure 
	$ || \phi_{\lambda} ||^{2} \, d \rho 
	 =   
	d \rho(\lambda) \, 
	|| \phi_{\lambda} ||^{2} $ on 
	$ B_{w} $, that is, 
	\begin{equation}
	\label{f-in-L1}
	f \in L^{1} \big(B_{w}, 
	 || \phi_{\lambda} ||^{2} \, d \rho \big).
	\end{equation}
    Then $ f \in \mathcal{L}_{cs}^{1} (B_{w}, \rho) $
	and $ Q_{cs} f : \mathcal{H} \to \mathcal{H} $ 
	is a bounded operator whose operator norm 
	satisfies 
	$ || Q_{cs} f || \le || f ||_{1}$,  
	the $ L^{1} $-norm of $ f $ in the $ L^{1} $-space 
	in \eqref{f-in-L1}. 
    Therefore, $ Q_{cs} :  
    L^{1} \big(B_{w}, 
    || \phi_{\lambda} ||^{2} \, d \rho \big) 
    \to \mathcal{B} (\mathcal{H})$ is a bounded 
    linear map of Banach spaces with 
    operator norm $ || Q_{cs} || \le 1 $. 
\end{prop}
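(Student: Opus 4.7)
The plan is to reduce everything to a single pointwise estimate on the integrand obtained via the Cauchy--Schwarz inequality, and then invoke the standard correspondence between bounded sesquilinear forms on $\mathcal{H}$ and bounded linear operators.

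First I would fix $\phi, \psi \in \mathcal{H}$ arbitrary and estimate the absolute value of the integrand in \eqref{define-Q-cs}. Applying Cauchy--Schwarz to each inner product separately gives
$$|\langle \phi, \phi_\lambda \rangle \langle \phi_\lambda, \psi \rangle| \le \|\phi\|\,\|\psi\|\,\|\phi_\lambda\|^{2},$$
so that
$$|f(\lambda)\langle \phi, \phi_\lambda \rangle \langle \phi_\lambda, \psi \rangle| \le |f(\lambda)|\,\|\phi_\lambda\|^{2}\,\|\phi\|\,\|\psi\|.$$
Integrating against $\rho$ and using the hypothesis \eqref{f-in-L1} yields
$$\int_{B_{w}} d\rho(\lambda)\,|f(\lambda)\langle \phi, \phi_\lambda \rangle \langle \phi_\lambda, \psi \rangle| \le \|\phi\|\,\|\psi\|\,\|f\|_{1}.$$
This shows simultaneously that the integrand is absolutely integrable for every $\phi,\psi \in \mathcal{H}$, so $f \in \mathcal{L}_{cs}^{1}(B_{w},\rho)$, and that the sesquilinear form
$$B(\phi,\psi) := \int_{B_{w}} d\rho(\lambda)\,f(\lambda)\,\langle \phi, \phi_\lambda \rangle \langle \phi_\lambda, \psi \rangle$$
is well defined with $|B(\phi,\psi)| \le \|f\|_{1}\,\|\phi\|\,\|\psi\|$.

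Next I would check that $B$ is genuinely sesquilinear in $(\phi,\psi)$ (conjugate-linear in $\phi$, linear in $\psi$), which is immediate from the (conjugate-)linearity of the inner products $\langle \phi, \phi_\lambda\rangle$ and $\langle \phi_\lambda,\psi\rangle$ together with linearity of the integral; the dominated convergence theorem is not needed since the bound above makes the interchange of scalar multiplication with the integral trivial. The standard Riesz correspondence between bounded sesquilinear forms and bounded operators on a Hilbert space then produces a unique $Q_{cs}f \in \mathcal{B}(\mathcal{H})$ satisfying \eqref{define-Q-cs}, with operator norm bounded by the form norm, i.e.\ $\|Q_{cs}f\| \le \|f\|_{1}$.

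Finally, linearity of the map $f \mapsto Q_{cs}f$ follows at once from linearity of the integral in \eqref{define-Q-cs} and the uniqueness in the Riesz correspondence; combined with the bound just obtained, this gives $\|Q_{cs}\| \le 1$ as a map between the two Banach spaces. The only substantive point is the Cauchy--Schwarz estimate above; there is no real obstacle, since once that estimate is in hand, everything reduces to formal consequences of the Riesz theorem and standard properties of the Lebesgue integral. The one thing to be careful about is to phrase the Riesz step correctly for a sesquilinear form that is conjugate-linear in the first slot (matching the convention set in Section~\ref{setting-section}), but this is purely a matter of bookkeeping.
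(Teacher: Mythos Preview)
Your proof is correct and follows essentially the same route as the paper: both hinge on the pointwise Cauchy--Schwarz estimate $|\langle\phi,\phi_\lambda\rangle\langle\phi_\lambda,\psi\rangle|\le\|\phi\|\,\|\psi\|\,\|\phi_\lambda\|^2$, integrate against $\rho$, and read off the operator-norm bound. The only difference is expository: you spell out the Riesz correspondence between bounded sesquilinear forms and bounded operators to produce $Q_{cs}f$, whereas the paper leaves that step implicit in its definition of $Q_{cs}f$ as ``the unique linear operator (if it exists)'' satisfying \eqref{define-Q-cs}.
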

\begin{proof}
We prove the absolute integrability of the  
integral in \eqref{define-Q-cs} 
for all $ \phi, \psi \in \mathcal{H} $
as follows: 
\begin{align}
\label{Q-cs-f-estimate}
\int_{B_{w}} d 
\rho(\lambda) \, \big| f(\lambda) \,& 
\langle \phi,  \phi_{\lambda} \rangle 
\langle \phi_{\lambda} , \psi \rangle \big| 
\le 
\left( 
 \int_{B_{w}} d \rho(\lambda) \, 
 || \phi_{\lambda}  ||^{2} \, \big| f(\lambda) \big| \,
\right) 
 \, || \phi || \, || \psi || 
\\
&= 
 || f ||_{1} \, || \phi || \, || \psi || < \infty,
 \nonumber
\end{align}
where we used the Cauchy-Schwarz inequality twice 
and the definition of $ || f ||_{1} $. 
Referring back to \eqref{define-Q-cs} we see that
$ |\langle \phi, (Q_{cs} f) \psi \rangle| 
\le  || f ||_{1} \, || \phi || \, || \psi ||$ 
for all $ \phi, \psi \in \mathcal{H} $, from which 
the bound on the operator norm 
$ || Q_{cs} f || \le  || f ||_{1} $ follows directly. 
Then we see immediately that $ || Q_{cs} || \le 1 $. 
\end{proof}

\begin{remark}
We will not elaborate on the standard details 
needed to extend this definition to 
(possibly unbounded)
operators $ Q_{cs} f $ for 
$ f $ in other spaces, including spaces of 
distributions. 
The condition \eqref{f-in-L1} implies that $ f $ 
has some decay that cancels the divergence of 
the integral \eqref{divergent-integral}. 
In particular the constant function $ f \equiv 1 $ 
does not satisfy \eqref{f-in-L1}. 
Nonetheless, the resolution of the identity says that 
$ Q_{cs} 1 = I$, the identity map.  
So the condition \eqref{f-in-L1} is not necessary 
for $ Q_{cs} f $ to be bounded; it only is sufficient. 
We leave finding a nice necessary and sufficient 
condition as an open problem 
\end{remark}

\section{Upper and Lower Symbols}

We now discuss a standard topic in the theory 
of coherent states. 
This dates back to the seminal works 
of Berezin in \cite{berezin1} and \cite{berezin2}, 
Glauber in \cite{glauber1} and \cite{glauber2} 
and Lieb in \cite{lieb}.    

\begin{definition}
\label{define-lower-symbol}
Let $ A $ be a densely defined linear operator
acting in the Hilbert space $ \mathcal{H} $. 
Suppose that for each $ \lambda \in B_{w} $ 
we have that 
$ \phi_{\lambda} \in \mathrm{D} (A) $, 
the domain of $ A $. 
Then the {\em unnormalized lower symbol} 
$ A^{\sharp} $
of $ A $ is defined for all $ \lambda \in B_{w} $ by 
$$
A^{\sharp}(\lambda):= 
\langle \phi_{\lambda}, A\phi_{\lambda} \rangle.
$$
One also says that $ A^{\sharp} $ is the 
{\em unnormalized covariant symbol} of $ A $. 

Similarly, the {\em (normalized) lower  symbol} 
$ A^{\flat} $ is defined by 
$$
A^{\flat}(\lambda):= 
\dfrac{\langle \phi_{\lambda}, A\phi_{\lambda} \rangle}
     {\langle \phi_{\lambda}, \phi_{\lambda} \rangle}
=
   \dfrac{A^{\sharp} (\lambda)}
   {\langle \phi_{\lambda}, \phi_{\lambda} \rangle}, 
$$
again for all $ \lambda \in B_{w} $.
(Recall that $ \phi_{\lambda} \ne 0 $ so that the 
denominator is non-zero.) 
Also one says $ A^{\flat} $ is the 
{\em (normalized)  covariant symbol} of $ A $ or the
{\em Berezin symbol}. 
\end{definition}
\begin{remark}
Therefore the lower symbol 
$ A^{\flat} : B_{w} \to \mathbb{C} $, 
that is, $ A^{\flat} $ is a function 
on the phase space. 
Also, if $ A $ is self-adjoint, which says
that it
represents a quantum observable, then 
$ A^{\flat} : B_{w} \to \mathbb{R} $, which 
says that it represents a classical observable. 
If $ A $ is a bounded operator, then 
$ A^{\flat} $ is a bounded function 
satisfying $ || A^{\flat} ||_{\infty} \le || A || $, 
where $ || \cdot ||_{\infty} $ is 
the $ L^{\infty} $ norm of a bounded function. 
An elementary  
example of a lower symbol is given by 
$ (T_{1})^{\flat} (\lambda) = I^{\flat} (\lambda) =1$, 
the constant function. 
In quantum theory
quantization is a (certain!)
way of passing from 
functions on phase space to operators acting in 
a Hilbert space. 
Since the operation $ A \mapsto A^{\flat} $ 
is a mapping in the opposite direction
(namely, from operators to functions on phase space), 
one sometimes refers to it 
as a {\em dequantization}.
Analogous comments hold for the unnormalized lower 
symbol $ A^{\sharp} $. 
\end{remark}

Also, the unnormalized lower symbol is related to 
the coherent state transform and the reproducing 
kernel by
$$
A^{\sharp} (\lambda) = 
\langle \phi_{\lambda}, A\phi_{\lambda} \rangle = 
C(\phi_{\lambda}) (\lambda) = 
K(\lambda,\lambda) \quad \mathrm{for~all~} 
\lambda \in D_{w}. 
$$

The following property for 
self-adjoint operators is well known. 
We wish merely to emphasize that it comes from 
a property for  
a wider class of operators.
Before stating this result, we recall that the
adjoint of a densely defined operator $ A $ acting 
in a Hilbert space is denoted as $ A^{*} $. 

\begin{prop}
\label{lower-symbol-A-star}
Let $ A $ be an operator as in 
Definition~\ref{define-lower-symbol}. 
Suppose that for each $ \lambda \in B_{w} $ 
we have that 
$ \phi_{\lambda} \in \mathrm{D} (A^{*}) $, 
the domain of $ A^{*} $. 
(This latter hypothesis guarantees that 
$ (A^{*})^{\sharp} $ and $ (A^{*})^{\flat} $
are defined.
For example, it holds for symmetric operators.) 
Then $ (A^{*})^{\sharp} = (A^{\sharp})^{*} $ and 
$ (A^{*})^{\flat} = (A^{\flat})^{*} $. 

In particular, if $ A $ is self-adjoint, then 
both $ A^{\sharp} $ and $ A^{\flat} $ are 
real-valued functions. 
\end{prop}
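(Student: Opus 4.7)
The plan is a direct unwinding of the definitions, exploiting the defining property of the adjoint together with the conjugate-symmetry of the Hilbert space inner product. The key observation is that because our convention makes $\langle\cdot,\cdot\rangle$ antilinear in the first entry, the identity $\langle u,v\rangle = \langle v,u\rangle^*$ holds, and the adjoint satisfies $\langle\phi,A^*\psi\rangle = \langle A\phi,\psi\rangle$ whenever $\phi\in\mathrm{D}(A^*)$ and $\psi\in\mathrm{D}(A)$.

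First I would take an arbitrary $\lambda\in B_w$. By the two hypotheses, $\phi_\lambda$ lies in both $\mathrm{D}(A)$ and $\mathrm{D}(A^*)$, so every inner product written below is meaningful. By definition,
\[
(A^*)^\sharp(\lambda) = \langle \phi_\lambda, A^*\phi_\lambda\rangle.
\]
Using the adjoint relation with $\phi=\psi=\phi_\lambda$, this equals $\langle A\phi_\lambda, \phi_\lambda\rangle$. Then conjugate symmetry of the inner product gives
\[
\langle A\phi_\lambda, \phi_\lambda\rangle \;=\; \langle \phi_\lambda, A\phi_\lambda\rangle^{*} \;=\; \bigl(A^\sharp(\lambda)\bigr)^{*},
\]
establishing the first identity.

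For the normalized version, I would simply divide through by $\langle\phi_\lambda,\phi_\lambda\rangle$, which is a strictly positive real number (since $\phi_\lambda\neq 0$). Because complex conjugation commutes with division by a positive real, we obtain
\[
(A^*)^\flat(\lambda) \;=\; \frac{(A^*)^\sharp(\lambda)}{\langle\phi_\lambda,\phi_\lambda\rangle} \;=\; \frac{\bigl(A^\sharp(\lambda)\bigr)^{*}}{\langle\phi_\lambda,\phi_\lambda\rangle} \;=\; \bigl(A^\flat(\lambda)\bigr)^{*}.
\]
Finally, if $A$ is self-adjoint, then $A=A^*$ as operators on a common domain (so in particular on the vectors $\phi_\lambda$), hence $A^\sharp = (A^\sharp)^*$ and $A^\flat = (A^\flat)^*$ as functions on $B_w$, which is precisely the statement that both are real-valued.

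There is really no substantive obstacle here; the argument is essentially bookkeeping. The only point that warrants care is verifying that the adjoint identity applies in the direction we need: this is exactly what the assumption $\phi_\lambda\in\mathrm{D}(A^*)$ supplies, and I would mention explicitly that symmetric operators automatically meet this requirement (since $\mathrm{D}(A)\subseteq\mathrm{D}(A^*)$ in that case), which justifies the parenthetical remark in the statement.
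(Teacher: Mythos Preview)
Your argument is correct and follows exactly the paper's approach: the same chain $(A^{*})^{\sharp}(\lambda)=\langle\phi_{\lambda},A^{*}\phi_{\lambda}\rangle=\langle A\phi_{\lambda},\phi_{\lambda}\rangle=\langle\phi_{\lambda},A\phi_{\lambda}\rangle^{*}=(A^{\sharp}(\lambda))^{*}$, after which the paper simply declares the rest ``immediate'' while you spell out the division by the positive real $\langle\phi_{\lambda},\phi_{\lambda}\rangle$ and the self-adjoint case. One tiny slip: in your preamble you wrote the domain conditions as $\phi\in\mathrm{D}(A^{*})$, $\psi\in\mathrm{D}(A)$, but the identity $\langle\phi,A^{*}\psi\rangle=\langle A\phi,\psi\rangle$ actually requires $\phi\in\mathrm{D}(A)$ and $\psi\in\mathrm{D}(A^{*})$; since you apply it with $\phi=\psi=\phi_{\lambda}$, which lies in both domains by hypothesis, this does not affect the proof.
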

\begin{proof}
For all $ \lambda \in B_{w} $ we see that 
$$
(A^{*})^{\sharp}(\lambda)= 
\langle \phi_{\lambda}, A^{*}\phi_{\lambda} \rangle 
=
\langle A \phi_{\lambda}, \phi_{\lambda} \rangle 
=
\langle \phi_{\lambda}, A\phi_{\lambda} \rangle^{*} 
=
(A^{\sharp} (\lambda) )^{*}
=
(A^{\sharp} )^{*}(\lambda). 
$$
The rest of the proposition is now immediate. 
\end{proof}

Now we give new terminology to something we already
have seen. 
\begin{definition}
	Suppose that 
	$ f : B_{w} \to \mathbb{C} $ is 
	 a function for which the
	coherent state quantization $ Q_{cs} f $ exists. 
	Then we say that $ f $ is the {\em upper 
    (or contravariant) symbol} for the 
    operator $ Q_{cs} f $. 
\end{definition}

\begin{remark}
Notice that both the upper and lower symbols 
are complex valued functions defined on $ B_{w} $,
that is, they are classical observables if they 
happen to be real valued.
Moreover, each is associated with a linear operator, 
that is, a quantum observable if it happens to be 
self-adjoint. 
Our presentation of upper symbols is not 
standard. 
Typically one starts with a linear operator $ A $ 
and looks for a function $ f $ such that 
$ A = Q_{cs} f  $, in which case one writes
$ f = \hat{A} $. 
However, such an $ f $ may not exist and, if it does,
it may not be unique.  
\end{remark}

Due to the presence of two quantization schemes 
here, we can ask some questions about their 
relationship. 
For example, what are the lower symbols of
the creation and annihilation operators?   

\begin{theorem}
\label{lower-of-T-theta-bar}
The normalized 
lower symbol of the annihilation operator 
$ T_{\overline{\theta}} $ 
is given by 
$
(T_{\overline{\theta}} )^{\flat} (\lambda) = 
\lambda 
$
for every $ \lambda \in B_{w} $, 
that is, the identity function on $ B_{w} $. 
\end{theorem}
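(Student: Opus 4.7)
The plan is to apply the definition of the normalized lower symbol together with the defining eigenvector property of the coherent states $\phi_{\lambda}$.

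First, I would write out the definition from Definition~\ref{define-lower-symbol}, namely
$$
(T_{\overline{\theta}})^{\flat}(\lambda) =
\frac{\langle \phi_{\lambda}, T_{\overline{\theta}} \phi_{\lambda} \rangle}
{\langle \phi_{\lambda}, \phi_{\lambda} \rangle}.
$$
I need to justify that this expression is well-defined: the denominator is nonzero because Proposition ensures $\phi_{\lambda} \neq 0$ for every $\lambda \in B_{w}$, and the numerator makes sense because the same proposition gives $\phi_{\lambda} \in \mathrm{D}(T_{\overline{\theta}})$, so the hypotheses of Definition~\ref{define-lower-symbol} are satisfied.

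Next, I would invoke the defining eigenvalue equation $T_{\overline{\theta}} \phi_{\lambda} = \lambda \phi_{\lambda}$, which was established in the same proposition. Substituting this into the numerator and pulling the scalar $\lambda$ out through linearity of the inner product in its second slot gives
$$
\langle \phi_{\lambda}, T_{\overline{\theta}} \phi_{\lambda} \rangle
= \langle \phi_{\lambda}, \lambda \phi_{\lambda} \rangle
= \lambda \, \langle \phi_{\lambda}, \phi_{\lambda} \rangle.
$$
Dividing both sides by $\langle \phi_{\lambda}, \phi_{\lambda} \rangle$ yields $(T_{\overline{\theta}})^{\flat}(\lambda) = \lambda$, which is exactly the claim.

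There is really no main obstacle here; the result is an immediate consequence of the fact that coherent states were defined precisely as eigenvectors of $T_{\overline{\theta}}$ with eigenvalue equal to the phase-space point. The only subtlety worth flagging for the reader is the domain issue (checking that $\phi_{\lambda}$ lies in $\mathrm{D}(T_{\overline{\theta}})$), but this is already handled by the earlier proposition, so the argument reduces to a one-line computation.
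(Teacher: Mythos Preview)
Your proof is correct and takes a genuinely different route from the paper's. The paper computes the unnormalized lower symbol $(T_{\overline{\theta}})^{\sharp}(\lambda) = \langle \phi_{\lambda}, T_{\overline{\theta}}\phi_{\lambda}\rangle$ by expanding $\phi_{\lambda}$ in the orthonormal basis $\{\phi_{n}\}$, applying \eqref{T-theta-bar} term by term, reindexing, and summing the resulting series to obtain $\lambda\,\|\phi_{\lambda}\|^{2}$; only then does it divide by $\|\phi_{\lambda}\|^{2}$. You instead invoke directly the eigenvector relation $T_{\overline{\theta}}\phi_{\lambda} = \lambda\phi_{\lambda}$ already established in the proposition, which collapses the whole computation to one line. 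Your argument is more conceptual and avoids repeating, in effect, the calculation that proved the eigenvalue equation in the first place. The paper's longer route does have one incidental payoff: it displays the explicit series for $(T_{\overline{\theta}})^{\sharp}(\lambda)$, making visible its dependence on $q$ and the weights $w_{k}$, a point the paper remarks on immediately afterward. But for the theorem as stated, your approach is both sufficient and cleaner.
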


\begin{proof}
For all $ \lambda \in B_{w} $ we see that  
the unnormalized lower symbol is 
\begin{align*}
(T_{\overline{\theta}} &)^{\sharp} (\lambda) = 
\langle \phi_{\lambda} , T_{\overline{\theta}} \, 
\phi_{\lambda} \rangle 
= 
\big\langle 
\phi_{\lambda}, 
\sum_{n=0}^{\infty} \lambda^{n} q^{n(n+1)/2}
w_{n}^{-1/2} 
 T_{\overline{\theta}} \, \phi_{n} 
 \big\rangle 
 \\
 &= 
 \big\langle 
 \phi_{\lambda}, 
 \sum_{n=1}^{\infty} \lambda^{n} q^{n(n+1)/2}
 w_{n}^{-1/2} 
 \left( \dfrac{w_{n}}{w_{n-1}} \right)^{1/2} 
 \!\!\! q^{-n} \phi_{n-1}
\big\rangle 
\\
&= 
\big\langle 
\sum_{k=0}^{\infty} \lambda^{k} q^{k(k+1)/2}
 w_{k}^{-1/2} \phi_{k}, 
\sum_{k=0}^{\infty} \lambda^{k+1} q^{(k+1)(k+2)/2}
 w_{k+1}^{-1/2} 
\left( \dfrac{w_{k+1}}{w_{k}} \right)^{1/2}
\!\!\! q^{-(k+1)} \phi_{k}
\big\rangle 
\\
&= 
\big\langle 
\sum_{k=0}^{\infty} 
\lambda^{k} q^{k(k+1)/2} w_{k}^{-1/2} \phi_{k}, 
\sum_{k=0}^{\infty} 
\lambda^{k+1} q^{k(k+1)/2} w_{k}^{-1/2} \phi_{k}
\big\rangle 
\\
&= 
\lambda \sum_{k=0}^{\infty} |\lambda|^{2k}
|q|^{k (k+1)}
 w_{k}^{-1}. 
\end{align*}
Recall from \eqref{CS-norm-squared} that 
$ || \phi_{\lambda} ||^{2} = 
\sum_{k=0}^{\infty} |\lambda|^{2k} |q|^{k(k+1)}
w_{k}^{-1} $. 
Consequently, the lower symbol of 
the annihilation operator 
$ T_{\overline{\theta}} $ is given by 
the quotient of these expressions, 
namely by 
$
(T^{r}_{\overline{\theta}} )^{\flat} (\lambda) = 
\lambda 
$
\end{proof} 
\begin{remark}
	Notice that the unnormalized lower symbol 
	$ (T_{\overline{\theta}} )^{\sharp} $ 
	depends on both $ q $ and 
	the weights $ w_{k} $. 
	Remarkably, the normalized lower symbol 
	$ (T_{\overline{\theta}} )^{\flat} $ is 
	independent of these parameters.  
\end{remark}

We now use another consequence 
of \eqref{T-theta-i-theta-bar-j}, 
namely that
the creation operator $ T_{\theta} $ is given by  
\begin{equation}
\label{T-theta-formula}
T_{\theta} \, \phi_{n} =  
\left( \dfrac{w_{n+1}}{w_{n}} \right)^{1/2}
\!\!\! \phi_{n+1} \quad \mathrm{for~all~} n \ge 0.
\end{equation}
Unlike the annihilation operator 
$ T_{\overline{\theta}} $ this 
formula does not depend on $ q $. 
Surprisingly, this means that for $ q \ne 1 $ 
the operators $ T_{\theta} $ and 
$ T_{\overline{\theta}} $ are not adjoints of 
each other. 
But this is simply a consequence of our 
definitions.  
Equation \eqref{T-theta-formula} 
defines $ T_{\theta} $ 
on the subspace $ \mathcal{P} $, which is
dense in $ \mathcal{H} $ and 
which is invariant under the action of 
$ T_{\theta} $. 

We now have gathered enough information for
the following calculation for 
the lower symbol of the creation operator.
This is a formal calculation, since we 
do not concern ourselves with domain 
considerations. 
So, we start with the unnormalized 
lower symbol: 
\begin{align*}
(T_{\theta} )^{\sharp} &(\lambda) = 
\langle \phi_{\lambda} , T_{\theta} \, 
\phi_{\lambda} \rangle 
\\
&= 
\big\langle 
\sum_{k=0}^{\infty} \lambda^{k} q^{k(k+1)/2}
w_{k}^{-1/2} \phi_{k}, 
\sum_{n=0}^{\infty} \lambda^{n} q^{n(n+1)/2}
w_{n}^{-1/2} 
T_{\theta} \, \phi_{n} \big\rangle 
\\
&= 
\big\langle 
\sum_{k=1}^{\infty} \lambda^{k} q^{k(k+1)/2}
w_{k}^{-1/2} \phi_{k}, 
\sum_{n=0}^{\infty} \lambda^{n} q^{n(n+1)/2}
w_{n}^{-1/2} 
\left( \dfrac{w_{n+1}}{w_{n}} \right)^{1/2}
\!\!\! \phi_{n+1}
\big\rangle 
\\
&= 
\big\langle 
\sum_{n=0}^{\infty} \lambda^{n+1} q^{(n+2)(n+1)/2}
w_{n+1}^{-1/2} \phi_{n+1}, 
\sum_{n=0}^{\infty} \lambda^{n} q^{n(n+1)/2}
w_{n}^{-1/2} 
\left( \dfrac{w_{n+1}}{w_{n}} \right)^{1/2}
\!\!\! \phi_{n+1}
\big\rangle 
\\
&=
\lambda^{*} \sum_{n=0}^{\infty} 
|\lambda|^{2n} |q|^{n(n+1)} q^{*(n+1)}
w_{n}^{-1}. 
\end{align*}
For $ q \ne 1 $ there is no apparent 
simplification of this formula. 
Also, the formula for the 
normalized lower symbol is equally unattractive. 
This is why we decided not to give the technical 
details to make this calculation rigorous. 
Also, this shows how the Toeplitz operators which use
left multiplication differ from the Toeplitz operators which use right multiplication, since in the latter 
case all factors of $ q $ disappear and everything 
works out too easily. 
For example, for the Toeplitz operators using right 
multiplication, $ T_{\theta} $ and 
$ T_{\overline{\theta}} $ are adjoint of each other
(on the appropriate domain), while that is false 
for the Toeplitz operators using left 
multiplication if $ q \ne 1 $. 

What this suggests is that instead of $ T_{\theta} $ we
should consider 
the adjoint operator 
$ (T_{\overline{\theta}} )^{*} $ to be 
the appropriate creation operator for 
this setting, even though it
might not be a Toeplitz operator. 
An elementary calculation shows that
$$
(T_{\overline{\theta}} )^{*} \phi_{n}
=
q^{-(n+1)} 
\left( 
\dfrac{w_{n+1}}{w_{n}} 
\right)^{1/2} 
\!\!\!\phi_{n+1},
$$
which has degree $ +1 $. 
Then we can extend $ (T_{\overline{\theta}} )^{*} $ 
to $ \mathrm{D} ( T_{\overline{\theta}} )$ by the 
obvious formula. 
Then a simple modification of the previous 
formal calculation shows rigorously 
for all $ \lambda \in B_{w} $ 
and $ q \in \mathbb{R} $ that 
\begin{align*}
(T_{\overline{\theta}} )^{* \sharp} (\lambda) 
&=
\lambda^{*} \sum_{n=0}^{\infty} 
|\lambda|^{2n} |q|^{n(n+1)} q^{*(n+1)} q^{-(n+1)}
w_{n}^{-1}
\\
&=
\lambda^{*} \sum_{n=0}^{\infty} 
|\lambda|^{2n} |q|^{n(n+1)} 
w_{n}^{-1} 
=
\lambda^{*} || \phi_{\lambda} ||^{2}.
\end{align*}
We have proved the next result. 
\begin{theorem}
Suppose that the parameter $ q $ of the Manin plane 
is real. 
Then the lower symbol of 
$ (T_{\overline{\theta}} )^{*} $ is given by 
$ 
(T_{\overline{\theta}} )^{* \flat} (\lambda) = 
\lambda^{*}
$ 
for all $ \lambda \in B_{w} $. 
\end{theorem}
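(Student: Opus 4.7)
The plan is to compute the unnormalized lower symbol $(T_{\overline{\theta}})^{*\sharp}(\lambda)=\langle \phi_{\lambda},(T_{\overline{\theta}})^{*}\phi_{\lambda}\rangle$ by direct expansion, and then normalize. Since we already have the explicit action
$$(T_{\overline{\theta}})^{*}\phi_{n}=q^{-(n+1)}\left(\dfrac{w_{n+1}}{w_{n}}\right)^{1/2}\phi_{n+1}$$
together with the expansion \eqref{coherent-state} of $\phi_{\lambda}$, the entire argument is a bookkeeping calculation in the orthonormal basis.

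First I would apply $(T_{\overline{\theta}})^{*}$ termwise to $\phi_{\lambda}=\sum_{n}\lambda^{n}q^{n(n+1)/2}w_{n}^{-1/2}\phi_{n}$, producing a series $\sum_{n}\lambda^{n}q^{n(n+1)/2}q^{-(n+1)}w_{n}^{-1/2}(w_{n+1}/w_{n})^{1/2}\phi_{n+1}$. Then I would take the inner product against the original expansion of $\phi_{\lambda}$, using orthonormality of $\{\phi_{k}\}$ to collapse the double sum to a single sum indexed, after a shift, by $n\ge 0$. Matching the $\phi_{n+1}$ terms on both sides and pulling out the overall factor $\lambda^{*}$ leaves a sum of the form $\sum_{n}|\lambda|^{2n}|q|^{n(n+1)}\,q^{*(n+1)}q^{-(n+1)}w_{n}^{-1}$.

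The decisive step is where the hypothesis $q\in\mathbb{R}$ enters: it forces $q^{*(n+1)}q^{-(n+1)}=q^{n+1}q^{-(n+1)}=1$, and the surviving sum becomes precisely $\|\phi_{\lambda}\|^{2}$ as given in \eqref{CS-norm-squared}. Thus $(T_{\overline{\theta}})^{*\sharp}(\lambda)=\lambda^{*}\|\phi_{\lambda}\|^{2}$, and dividing by $\|\phi_{\lambda}\|^{2}$ (which is nonzero since $\phi_{\lambda}\ne 0$) yields the claimed normalized lower symbol $\lambda^{*}$. For general complex $q$ one would be stuck with an unwanted residual phase $(q^{*}/q)^{n+1}$ in each term, which is exactly why the theorem requires $q$ to be real.

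The main thing to be careful about is not the algebra but domain rigor: one must justify that $\phi_{\lambda}\in\mathrm{D}((T_{\overline{\theta}})^{*})$ for every $\lambda\in B_{w}$, so that the pairing $\langle\phi_{\lambda},(T_{\overline{\theta}})^{*}\phi_{\lambda}\rangle$ makes sense and the termwise manipulation of the series is legitimate. This reduces to checking that the series defining $(T_{\overline{\theta}})^{*}\phi_{\lambda}$ converges in $\mathcal{H}$; a direct estimate mirroring the one already carried out for $T_{\overline{\theta}}\phi_{\lambda}$ in the proof just before \eqref{coherent-state}'s proposition shows that the relevant tail is controlled by $|\lambda|^{2}\,\|\phi_{\lambda}\|^{2}$ and hence is finite exactly when \eqref{CS-norm-squared} holds. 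Once this is in place, the interchange of inner product and infinite sum is standard and the rest of the argument goes through as described.
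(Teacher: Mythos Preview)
Your proposal is correct and follows essentially the same route as the paper: a direct termwise computation of $(T_{\overline{\theta}})^{*\sharp}(\lambda)$ using the explicit action of $(T_{\overline{\theta}})^{*}$ on the basis $\{\phi_{n}\}$, yielding $\lambda^{*}\sum_{n}|\lambda|^{2n}|q|^{n(n+1)}q^{*(n+1)}q^{-(n+1)}w_{n}^{-1}$, which collapses to $\lambda^{*}\|\phi_{\lambda}\|^{2}$ precisely when $q\in\mathbb{R}$. Your explicit attention to the domain issue (showing $\phi_{\lambda}\in\mathrm{D}((T_{\overline{\theta}})^{*})$) is a welcome addition that the paper handles only in passing.
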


\section{Upper symbol of a Toeplitz operator}

We now consider another relation between these
two quantizations. 
We take a Toeplitz operator and ask 
whether it has the form $ Q_{cs} f$ for 
some upper symbol $ f $.  
If $ f \in L^{1} \big(B_{w}, 
|| \phi_{\lambda} ||^{2} \, d \rho \big) $, 
we have that $ Q_{cs} f$ 
is bounded, and so $ \phi_{n}  $ is in its 
domain.  
When considering the more 
general case of an unbounded
$ Q_{cs} f$ we will only consider the case when 
$ \phi_{n}  $ is in its domain for all $ n \ge 0 $. 
We start off with a calculation
for integers $ k,n \ge 0 $ of a matrix element,   
\begin{align}
\label{Q-cs-f-phi-n}
\langle \phi_{k}&, Q_{cs} ( f ) \phi_{n} \rangle = 
\int_{B_{w}} 
d \rho(\lambda) \,  f(\lambda) \,  
\langle \phi_{k}, \phi_{\lambda} \rangle 
\langle \phi_{\lambda}, \phi_{n} \rangle  
\\
&= 
q^{k(k+1)/2}  w_{k}^{-1/2} q^{*n(n+1)/2} w_{n}^{-1/2} 
\int_{B_{w}} 
d \rho(\lambda) \,  f(\lambda) \, 
\lambda^{k}  \lambda^{*n}. 
\nonumber
\end{align}

One question is whether this 
can give us the annihilation operator 
$ T_{\overline{\theta}}$. 
But  we know from \eqref{T-theta-bar} 
that the matrix elements of 
$ T_{\overline{\theta}}$ for $ n \ge 1 $ are 
$$
\langle 
\phi_{k},  T_{\overline{\theta}} \, \phi_{n} 
\rangle 
=
\delta_{k,n-1} q^{-n}
\left( 
\dfrac{w_{n}}{w_{n-1}}
\right)^{1/2} 
~~~ \mathrm{for~all~} k \ge 0. 
$$
This formula is also valid for $ n = 0 $ 
provided that we put $ w_{-1} = 1 $, say. 
Two operators are equal on $ \mathcal{P} $ 
if and only if their 
matrix elements are equal on $ \mathcal{P} $, 
and so we see 
that $  T_{\overline{\theta}} = Q_{cs}(f) $
on $ \mathcal{P} $  
if and only if the unknown upper symbol 
$ f $ satisfies 
\begin{align}
\label{f-condition}
\int_{B_{w}} 
d \rho(\lambda) \,  &f(\lambda) \, 
\lambda^{k} \lambda^{*n} = 
\\
&=
\delta_{k,n-1} q^{-n}
\Big( 
\dfrac{w_{n}}{w_{n-1}}
\Big)^{1/2} \!\! 
q^{-k(k+1)/2}  w_{k}^{1/2} (q^{*})^{-n(n+1)/2}
w_{n}^{1/2} 
\nonumber
\\
&=
\delta_{k,n-1} q^{-n}    
q^{-n(n-1)/2}  (q^{*})^{-n(n+1)/2}
w_{n} 
\nonumber
\\
&=
\delta_{k,n-1}  
q^{-n(n+1)/2}  (q^{*})^{-n(n+1)/2}
w_{n} 
\nonumber 
\\
&=
\delta_{k,n-1} 
|q|^{-n(n+1)} 
w_{n} 
\nonumber
\end{align}
for all $ k , n \ge 0 $. 
Therefore the question reduces to whether there exists 
an upper symbol $ f $ satisfying 
\eqref{f-condition}. 
We have the following partial answer. 
\begin{theorem}
\label{annihilation-partial-answer}
	Let $ \rho $ be a radial measure 
	that satisfies \eqref{first-rho-condition}.  
	Define 
	$ f : B_{w} \to \mathbb{C} $ by 
	$ f (\lambda) = \lambda $, the identity map. 
	Suppose that 
	$  T_{\overline{\theta}} $ is a bounded operator. 
    Then 
	$  T_{\overline{\theta}} = Q_{cs} f $,
    that is, the operator
    $  T_{\overline{\theta}} $ has $ f $ as 
    an upper symbol. 
\end{theorem}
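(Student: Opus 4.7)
The plan is to reduce the theorem, via the computation preceding it, to verifying the moment identity \eqref{f-condition} for the choice $ f(\lambda) = \lambda $ and then to upgrade the matrix-element identity this produces to an operator equality on all of $ \mathcal{H} $. Since \eqref{f-condition} was itself derived from the requirement that $ \langle \phi_{k}, Q_{cs}(f) \phi_{n} \rangle = \langle \phi_{k}, T_{\overline{\theta}} \phi_{n} \rangle $ for every basis pair, no further matching of matrix elements needs to be checked once both operators are known to be bounded.

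For the verification of \eqref{f-condition}, I would write $ \lambda = r e^{i \alpha} $ and use the radial symmetry of $ \rho $. The angular integral of $ r^{n+k+1} e^{i(k+1-n)\alpha} $ over $ [0, 2\pi] $ equals $ 2 \pi \delta_{k+1, n} $, so only $ k = n-1 $ survives and the remaining radial integral is $ 2 \pi \int_{0}^{R_{w}} dr \, \rho(r) \, r^{2n+1} $. By \eqref{first-rho-condition} with $ j = n $ this equals $ |q|^{-n(n+1)} w_{n} $, which is precisely the required value of the right-hand side of \eqref{f-condition}. This step is a direct calculation.

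Next I would establish that $ Q_{cs} f $ is genuinely well defined as a bounded operator on $ \mathcal{H} $, i.e.\ that $ f \in \mathcal{L}_{cs}^{1}(B_{w}, \rho) $. The hypothesis that $ T_{\overline{\theta}} $ is bounded forces $ R_{w} < \infty $: from $ |q|^{-2n}(w_{n}/w_{n-1}) \le M $ one iterates to get $ w_{n} \le M^{n} |q|^{n(n+1)} w_{0} $, whence $ R_{w}^{2} = \liminf_{n \to \infty} |q|^{-(n+1)} w_{n}^{1/n} \le M < \infty $, so $ |f(\lambda)| \le R_{w} $ on $ B_{w} $. Expanding $ \psi = \sum c_{n} \phi_{n} \in \mathcal{H} $, an orthogonality-on-the-circle calculation combined with \eqref{first-rho-condition} yields $ \int_{B_{w}} |\langle \phi_{\lambda}, \psi \rangle|^{2} \, d \rho = \sum_{n} |c_{n}|^{2} = || \psi ||^{2} $. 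Applying Cauchy-Schwarz in $ L^{2}(B_{w}, \rho) $ then gives
\begin{equation*}
\int_{B_{w}} |\lambda| \, |\langle \phi, \phi_{\lambda} \rangle| \, |\langle \phi_{\lambda}, \psi \rangle| \, d \rho \le R_{w} \, || \phi || \, || \psi || < \infty
\end{equation*}
for all $ \phi, \psi \in \mathcal{H} $, placing $ f $ in $ \mathcal{L}_{cs}^{1}(B_{w}, \rho) $ and making $ Q_{cs} f $ a bounded operator with $ || Q_{cs} f || \le R_{w} $.

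The main obstacle is precisely this integrability step: the naive majorization by $ || \phi_{\lambda} ||^{2} $ fails since that function is not $ \rho $-integrable by \eqref{divergent-integral}. One bypasses this divergence by using that $ \psi \mapsto \langle \phi_{\cdot}, \psi \rangle $ is already an $ L^{2} $-isometry into $ L^{2}(B_{w}, \rho) $, a fact available here from \eqref{first-rho-condition} and the radial assumption alone, without needing the full resolution of the identity as a separate hypothesis. Once these two ingredients are assembled, the matrix-element identity provided by \eqref{f-condition} together with the boundedness of both operators forces $ Q_{cs} f = T_{\overline{\theta}} $ on the orthonormal basis $ \{ \phi_{n} \} $, and hence on all of $ \mathcal{H} $.
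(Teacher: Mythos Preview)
Your proposal is correct and follows the same core route as the paper: both verify \eqref{f-condition} for $f(\lambda)=\lambda$ by passing to polar coordinates, picking up the Kronecker delta from the angular integral, and then invoking \eqref{first-rho-condition} for the radial integral; both then pass from equality of matrix elements on the orthonormal basis to equality on $\mathcal{H}$ by boundedness.

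The substantive difference is that you supply an argument the paper simply omits: you check that $Q_{cs}f$ is actually a well-defined bounded operator under the stated hypotheses. The paper's proof just writes ``$Q_{cs}f$, a bounded operator'' and moves on, implicitly leaning on the standing assumption from Section~\ref{section-cs-quantization} that a resolution of the identity holds. Your argument instead derives what is needed directly from \eqref{first-rho-condition} and radiality: you observe that boundedness of $T_{\overline{\theta}}$ forces $R_w<\infty$, hence $|f|\le R_w$ on $B_w$, and you use the $L^2$-isometry $\int_{B_w}|\langle\phi_\lambda,\psi\rangle|^2\,d\rho=\|\psi\|^2$ (which is exactly the diagonal case of the resolution of the identity and follows for $\psi\in\mathcal{P}$ from \eqref{first-rho-condition}, then extends by density) together with Cauchy--Schwarz in $L^2(B_w,\rho)$ to place $f$ in $\mathcal{L}^1_{cs}(B_w,\rho)$ with $\|Q_{cs}f\|\le R_w$. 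This is a genuine addition of rigor over the paper's treatment, and it neatly sidesteps the obstruction you correctly flag, namely that $\|\phi_\lambda\|^2$ is not $\rho$-integrable.
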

\begin{proof}
We evaluate the integral in \eqref{f-condition} 
for this choice of $ f $ as follows: 
\begin{align*}
\int_{B_{w}} 
d \rho(\lambda) \,  f(\lambda) \, 
\lambda^{k} \lambda^{*n} 
&=
\int_{B_{w}} 
d \rho(\lambda) \, \lambda \, 
\lambda^{k} \lambda^{*n} 
\\
&= 
\int_{0}^{2 \pi} d \alpha 
\int_{0}^{R_{w}} r \, d r \rho (r) \, r^{k+n+1} 
e^{i (k+1) \alpha} e^{- i n \alpha} 
\\
&=
2 \pi \, \delta_{k+1,n}  
\int_{0}^{R_{w}} d r \rho (r) \, r^{k+n+2} 
\\
&=
2 \pi \, \delta_{k,n-1}  
\int_{0}^{R_{w}} d r \rho (r) \, r^{2n+1} 
\\
&=
2 \pi \, \delta_{k,n-1} \, |q|^{-n(n+1)}
\left( \dfrac{w_{n}}{2 \pi} \right) 
\\
&= 
 \delta_{k,n-1} |q|^{-n(n+1)} w_{n}, 
\end{align*}
where we used \eqref{first-rho-condition} in the 
next to last equality.  
And so \eqref{f-condition} is proved, which shows 
that on $ \mathcal{P} $ we have 
 $ T_{\overline{\theta}} = Q_{cs} f $, 
a bounded operator. 
So, $ T_{\overline{\theta}} = Q_{cs} f $ 
on $ \mathcal{H} $. 
\end{proof}

We also have a partial answer for the 
creation operator $ ( T_{\overline{\theta}} )^{*} $.

\begin{corollary}
	With the same hypotheses as in Theorem 
	\ref{annihilation-partial-answer}
	there exists an upper symbol 
	$ g : B_{w} \to \mathbb{C} $ 
	for $ ( T_{\overline{\theta}} )^{*} $, 
	which is a bounded operator. 
	This means that 
	$  ( T_{\overline{\theta}} )^{*} = Q_{cs} g $. 
\end{corollary}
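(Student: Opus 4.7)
The plan is to obtain $g$ as the complex conjugate of the upper symbol $f$ produced in Theorem~\ref{annihilation-partial-answer}, using the adjoint compatibility of coherent state quantization noted in Remark~\ref{remark-section-8}. Concretely, I would set
$$
  g(\lambda) := f(\lambda)^{*} = \lambda^{*}
  \quad \mathrm{for~all~} \lambda \in B_{w},
$$
so that $g$ is the conjugate of the identity map.

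First I would verify that $g \in \mathcal{L}_{cs}^{1}(B_{w},\rho)$. This is automatic because $|g(\lambda)| = |f(\lambda)|$ pointwise, and the absolute integrability condition in \eqref{define-Q-cs} depends only on $|g(\lambda)|$; since we already know from Theorem~\ref{annihilation-partial-answer} that $f \in \mathcal{L}_{cs}^{1}(B_{w},\rho)$, the same holds for $g$. Hence $Q_{cs} g$ is a well-defined bounded operator, with the operator norm bound supplied by the proposition preceding Remark~\ref{remark-section-8} (noting $|| g ||_1 = || f ||_1$ with respect to the measure $|| \phi_{\lambda} ||^{2} d\rho$, when that condition applies; in any case boundedness follows from $(T_{\overline{\theta}})^{*}$ being bounded since $T_{\overline{\theta}}$ is).

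Next I would invoke the identity $Q_{cs}(h^{*}) = (Q_{cs} h)^{*}$ from Remark~\ref{remark-section-8}, applied to $h = f$. This yields
$$
  Q_{cs} g \;=\; Q_{cs}(f^{*}) \;=\; (Q_{cs} f)^{*}
  \;=\; (T_{\overline{\theta}})^{*},
$$
where the last equality uses the conclusion $T_{\overline{\theta}} = Q_{cs} f$ of Theorem~\ref{annihilation-partial-answer}. This exhibits $g$ as an upper symbol for the creation operator $(T_{\overline{\theta}})^{*}$.

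There is no real obstacle here; the content is entirely in the previous theorem and in the adjoint property of $Q_{cs}$. The only mildly delicate point is justifying the adjoint identity $Q_{cs}(f^{*}) = (Q_{cs} f)^{*}$ in the precise form we need, which is a direct consequence of the sesqui-linear definition \eqref{define-Q-cs}: interchanging $\phi$ and $\psi$ and conjugating the integrand converts the matrix element of $Q_{cs}(f^{*})$ into that of $(Q_{cs} f)^{*}$. Once this is noted, the corollary drops out immediately.
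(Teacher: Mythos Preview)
Your proposal is correct and follows essentially the same route as the paper: set $g(\lambda)=\lambda^{*}=f(\lambda)^{*}$, invoke the adjoint identity $Q_{cs}(f^{*})=(Q_{cs} f)^{*}$, and apply Theorem~\ref{annihilation-partial-answer} to conclude $Q_{cs} g=(T_{\overline{\theta}})^{*}$, with boundedness inherited from that of $T_{\overline{\theta}}$. The only difference is that you add an (inessential) discussion of membership in $\mathcal{L}_{cs}^{1}(B_{w},\rho)$, which the paper omits.
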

\begin{proof}
Again taking 
$ f (\lambda) = \lambda $ we have 
$
( T_{\overline{\theta}} )^{*} = 
( Q_{cs} f )^{*} = Q_{cs} (f^{*}),
$
where  we used 
Theorem \ref{annihilation-partial-answer} 
in the first equality 
and a basic property of the coherent state 
quantization $ Q_{cs} $ in the 
second equality. 
So we take 
$ g(\lambda) = f^{*}(\lambda) = \lambda^{*} $. 
The boundedness of $ ( T_{\overline{\theta}} )^{*} $ 
follows immediately from the hypothesis of 
Theorem~\ref{annihilation-partial-answer}.  
\end{proof}

\begin{remark} 
The hypothesis in 
Theorem \ref{annihilation-partial-answer} 
that the measure 
$ \rho $ is radial is quite restrictive, of course. 
We do not know nor venture to conjecture what 
happens if that hypothesis is dropped. 
We also leave as open problems whether the number 
operator $ N $ or the creation operator 
$ T_{\theta} $ has an upper symbol. 
\end{remark}

\section{A Comparison}
\label{comparison-section}

We will now make a comparison with a finite dimensional 
algebra called the {\em paragrassmann algebra}, 
whose coherent states and their corresponding 
quantization were introduced and 
studied in \cite{baz}. 
The reproducing kernel object and the Toeplitz 
operators for this non-commutative space were 
investigated in \cite{Part-I} and \cite{Part-II}.
Coherent states were not introduced in \cite{sbs1}, 
but in that setting they would have been defined 
analogously to the coherent states introduced 
in \cite{baz} as a finite sum, 
namely as a formal infinite sum in Dirac notation
$$
  | \theta \rangle := \sum_{n=0}^{\infty} 
  w_{n}^{-1/2} \phi_{n} \otimes e_{n}, 
$$
where $\{ e_{n} ~|~ n \in \mathbb{N} \} $ is 
an orthonormal basis for an auxiliary Hilbert space.
This corresponds to \eqref{coherent-state} 
if we put $ \lambda = 1 $ there and take that also 
to be a formal infinite sum. 
But here we have what appears to be a better approach, 
since we do not fuss with making sense of 
formal infinite sums
but rather use convergent infinite series. 
This also has the advantage of providing a naturally 
defined phase space arising from the quantum theory.  
However, in \cite{baz} there is no phase space 
presented and that is in accord with our approach 
here as we now discuss. 

We start with a brief review of material in 
\cite{Part-II} where more details can be found. 
First, fix $ q \in \mathbb{C} \setminus \{ 0 \}$ 
and an integer $ l \ge 2 $. 
Then the paragrassmann algebra $ PG_{l,q} $
is defined as the quotient of the Manin plane 
$ \mathbb{C} Q_{q} (\theta, \overline{\theta})$
by putting $ \theta^{l} =0 $ and 
$ \overline{\theta}\,^{l} = 0 $. 
Then using a finite sequence of weights $ w_{n} > 0 $ 
for $ 0 \le n \le l - 1$
one introduces a sesqui-linear form by 
$
   \langle \theta^{i} \overline{\theta}\,\!^{j}, 
   \theta^{k} \overline{\theta}\,\!^{l} \rangle := 
   w_{i+l} \, \delta_{i-j,k-l} \chi_{l} (i+l),
$
where $ \chi_{l} (n) = 1 $ for $ 0 \le n \le l - 1$ 
and $ \chi_{l} (n) = 0 $ for $ n \ge l $. 
Here we have $ i,j,k,l \in \{ 1, 2, \dots, l-1 \} $. 
(Technically, we also have to put 
$ w_{n} =0 $ for $ n \ge l $.) 
Then this sesqui-linear form, when restricted to
the finite-dimensional 
sub-algebra~$ \mathcal{F} $ of $ PG_{l,q} $ 
generated by $ \theta $, satisfies
$$
\langle \theta^{i} , 
\theta^{k} \rangle := 
w_{i} \, \delta_{i,k} \chi_{l} (i) = 
w_{i} \, \delta_{i,k}
$$
and so is a positive definite inner product, 
thereby making the sub-algebra 
$ \mathcal{F} $ into a Hilbert space. 
There is an associated projection operator 
$ P : PG_{l,q} \to PG_{l,q} $, 
given in Dirac notation as 
$ P = \sum_{k=0}^{l-1} w_{k}^{-1}
 | \theta^{k} \rangle \langle  \theta^{k}  | $, 
whose range is $ \mathcal{F} $. 
Then a Toeplitz operator $ T_{g} $ with symbol 
$ g \in  PG_{l,q} $ is defined as expected:
$ T_{g} \, \phi := P(\phi g) $ for all 
$ \phi \in \mathcal{F} $. 
And it turns out that 
$ T_{g} : \mathcal{F} \to \mathcal{F} $ 
is a linear map. 
All of this is reminiscent of the Manin 
quantum plane except for details having to 
do with the fact that the variables $ \theta $ 
and $ \overline{\theta} $ are nilpotents. 
And a similar argument (see \cite{Part-II})
also shows that  
the annihilation operator $ T_{\overline{\theta}} $ 
satisfies 
\begin{equation}
\label{nilpotent}
T_{\overline{\theta}} \, \theta^{j}= 
\left( \dfrac{w_{j}}{w_{j-1}} \right)^{1/2}
\theta^{j-1}
\quad \mathrm{for~} 1 \le j \le l-1 
\quad \mathrm{and} \quad 
T_{\overline{\theta}} \, 1 = 0.
\end{equation} 
But we use $ \theta^{l} = 0 $ and
$ \theta^{l - 1} \ne 0 $ to see that 
$ ( T_{\overline{\theta}} )^{l} =0 $
and 
$ ( T_{\overline{\theta}} )^{l-1} \ne 0 $. 
So, $ T_{\overline{\theta}} $ is a nilpotent 
operator of nilpotency $ l $ acting on the 
Hilbert space $ \mathcal{F} $ whose dimension 
is $ l $. 
In fact \eqref{nilpotent} shows that 
$ T_{\overline{\theta}} $
is equivalent to one $ l \times l $ 
Jordan block with zeros along the diagonal. 
And its spectrum  is 
$ \mathrm{Spec} (T_{\overline{\theta}}) = \{ 0 \}$, 
with $\theta^{0} = 1 $ being an 
eigenvector of multiplicity one. 
Also, there are no other eigenvectors. 
So the phase space is the one-point set $\{ 0 \} $, 
a truly trivial situation.
And $ \phi_{\lambda = 0} = 1 $ is the only coherent state 
for the paragrassmann algebra, 
again a trivial situation. 

So our approach gives rather curious results for 
the paragrassmann algebra $ PG_{l,q} $ 
in contrast to the 
more conventional example of the Manin plane. 
We wish to emphasize that the 
paragrassmann algebra $ PG_{l,q} $ for $ q \ne 1 $
is a non-commutative quantum theory 
whose phase space is trivial.  
Moreover, the Hilbert space $ \mathcal{F} $ is not 
spanned by the coherent states, but far from it. 
According to Definition~\ref{define-extreme-quantum} 
the paragrassmann algebra $ PG_{l,q} $ 
is an extreme quantum theory. 
On the other hand, $ PG_{l,q} $ 
(even for the commutative case $q=1$)
is a quantum space with only one point,
that is, with exactly one unital algebra morphism 
$\alpha : PG_{l,q} \to \mathbb{C}$, 
since the nilpotency conditions force 
$ \alpha (\theta) = \alpha (\overline{\theta}) = 0 $.

\section{Concluding Remarks}

Here we have introduced the coherent states 
associated with just 
one Toeplitz annihilation operator. 
Rather analogous 
results should hold in settings where there 
is a commuting family of Toeplitz annihilation 
operators whose coherent states are defined as their 
common eigenvectors and whose phase space 
consists of their ordered $ n $-tples of eigenvalues. 
A more interesting situation arises if one is dealing 
with a family of non-commuting Toeplitz annihilation 
operators and their coherent states. 
Other possibilities for further research on this
topic include studying the semi-classical limit
and the minimal uncertainty of the coherent states. 
Also, the role of the Manin plane can be played 
by other non-commutative planes, and we will consider  
that topic in a forthcoming paper.

\end{document}